\title{Subspace Polynomials and Cyclic Subspace Codes}
\author{Eli Ben-Sasson\IEEEauthorrefmark{2} \and Tuvi Etzion\IEEEauthorrefmark{1} \and Ariel Gabizon\IEEEauthorrefmark{2} \and Netanel Raviv\IEEEauthorrefmark{1}}
\date{}
\newtheorem{theorem}{Theorem}
\newtheorem{definition}{Definition}
\newtheorem{lemma}{Lemma}
\newtheorem{corollary}{Corollary}
\newtheorem{conjecture}{Conjecture}
\newtheorem{construction}{Construction}
\newtheorem{example}{Example}
\newtheorem{remark}{Remark}
\newcommand{\cl}[1]{\mathcal{#1}}
\newcommand{\sus}{\subseteq}
\newcommand{\bC}{\mathbb{C}}
\newcommand{\bF}{\mathbb{F}}
\newcommand{\bN}{\mathbb{N}}
\newcommand{\bZ}{\mathbb{Z}}
\newcommand{\cP}{\mathcal{P}}
\newcommand{\qbin}[3]{{#1 \brack #2}_{#3}}
\newcommand{\grsmn}[3]{\cl{G}_{#1}\left(#2,#3\right)}
\DeclareMathOperator{\gap}{gap}
\DeclareMathOperator{\lcm}{lcm}
\DeclareMathOperator{\ord}{ord}
\begin{document}
\maketitle

\begin{abstract}
Subspace codes have received an increasing interest recently due to their application in
error-correction for random network coding. In particular, cyclic subspace codes are possible candidates
for large codes with efficient encoding and decoding algorithms. In this paper we consider
such cyclic codes and provide constructions of optimal codes for which their codewords do not
have full orbits. We further introduce a new way to represent subspace codes by a class of polynomials
called subspace polynomials. We present some constructions of such codes which are cyclic
and analyze their parameters.
\end{abstract}

\footnotetext[1]{This research was supported in part by the Israeli
Science Foundation (ISF), Jerusalem, Israel, under
Grant 10/12.}

\footnotetext[2]{This research was supported in part by the European Community's Seventh Framework
Programme (FP7/2007-2013) under grant agreements number 257575 and 240258.}

\footnotetext[0]{
The work of Netanel Raviv is part of his Ph.D.
thesis performed at the Technion.

The authors are with the Department of Computer Science, Technion,
Haifa 3200003, Israel.

e-mail: \{eli,etzion,arielga,netanel\}@cs.technion.ac.il.}

\section{Introduction}
Let $\bF_q$ be the finite field of size $q$, and let
$\bF_q^*\triangleq \bF_q \setminus\{0\}$. For $n\in \bN$ denote
by $\bF_{q^n}$ the field extension of degree $n$ of $\bF_q$ which may be seen as the vector space of dimension $n$ over $\bF_q$. By abuse of notation, we will not distinguish between these two concepts. Given a non-negative integer $k\le n$, the set of all $k$-dimensional
subspaces of~$\bF_{q^n}$ forms a \textit{Grassmannian} space
(Grassmannian in short) over $\bF_q$, which is denoted by $\grsmn{q}{n}{k}$. The size of $\grsmn{q}{n}{k}$ is given by the well-known Gaussian coefficient~$\qbin{n}{k}{q}$.
The set of all subspaces of~$\bF_{q^n}$ is called the
\textit{projective space} of order $n$ over $\bF_q$ \cite{EV11-ECCinPS} and
is denoted by $\cP_q(n)$. The set $\cP_q(n)$ is endowed
with the metric $d(U,V)=\dim U + \dim V-2\dim(U\cap V)$.
A subspace code is a collection $\bC$ of subspaces from $\cP_q(n)$.
In this paper we will be mainly interested in
\textit{constant dimension} codes (called also Grassmannian codes),
that is, $\bC \subseteq \grsmn{q}{n}{k}$ for some $k\le n$.

Subspace codes and constant dimension codes have attracted a lot of
research in the last eight years. The motivation was given in~\cite{KK08-CodingFor},
where it was shown how subspace codes may be used in random network
coding for correction of errors and erasures. This application
of subspace codes renewed the interest in a wide variety of problems
related to vector spaces \cite{OldRes3,OldRes1,OldRes2,OldRes4}, particularly in constructions of large
codes with error correction capability, efficient encoding algorithms
for these codes, as well as efficient decoding algorithms.

In~\cite{KK08-CodingFor} a novel construction of large
subspace codes using \textit{linearized polynomials} (a.k.a. $p$-polynomials~\cite{O33-onSpecial})
is presented. These codes were later shown~\cite{SKK08-rankMetricApproach}
to be related to optimal rank-metric codes through an operation called \textit{lifting}.
These two techniques and some of their variants
are the main known tools for constructing subspace codes.

It was previously suggested \cite{BEOVW,EV11-ECCinPS,
KoKu08}
that \textit{cyclic subspace codes} may present a useful structure that
can be applied efficiently for the purpose of coding. For a subspace
$V\in\grsmn{q}{n}{k}$ and $\alpha \in \bF_{q^n}^*$ we define the
\textit{cyclic shift} of $V$ as $\alpha V \triangleq \{\alpha v ~\vert~ v\in V\}$.
The set $\alpha V$ is clearly a subspace of the same dimension as~$V$. Two cyclic shifts are called \textit{distinct} is they form two different subspaces.
A~subspace code $\bC$ is called \textit{cyclic} if for
every $\alpha \in \bF_{q^n}^*$ and every $V\in\bC$ we have $\alpha V \in \bC$.

In~\cite{EV11-ECCinPS,KoKu08} several examples of optimal cyclic subspace codes
with small dimension were found. In~\cite{BEOVW} an optimal code
which also forms a $q$-analog of Steiner system was presented. This
code has an automorphism group which is generated by a cyclic shift
and the Frobenius mapping (known together also as a normalizer of a
Singer subgroup~\cite{BEOVW},\cite[pp. 187-188]{H67-Endliche}).
These codes raised the plausible conjecture
that large cyclic codes may be constructed in any dimension.
However, the current approaches for construction of subspace codes fall
short with handling cyclic codes. In this paper we aim at establishing
new general techniques for constructions of cyclic codes.

In~\cite{TMBR13-CyclicOrbitCodes} a thorough algebraic analysis
of the structure of cyclic orbit codes is given.
One class of such codes is the cyclic codes. However,
no nontrivial construction is given. In~\cite{GMT14-StabSub}
a construction of cyclic codes with degenerated orbit (of size less than $\frac{q^n-1}{q-1}$) is given. This construction produces a subcode of some codes in our work (see Section~\ref{section:deg_orbit}). Both~\cite{GMT14-StabSub} and~\cite{TMBR13-CyclicOrbitCodes} raised the following conjecture:

\begin{conjecture}
\label{conjecture:2kminus2}
For every positive integers $n,k$ such that $k< n/2$, there
exists a cyclic code of size~$\frac{q^n-1}{q-1}$ in $\grsmn{q}{n}{k}$ and minimum distance $2k-2$.
\end{conjecture}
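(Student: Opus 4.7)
My plan is to exhibit a single orbit realizing the conjectured parameters by designing a subspace $V \in \grsmn{q}{n}{k}$ whose subspace polynomial has the sparse trinomial form
\[
P_V(x) = x^{q^k} + a\, x^q + b\, x, \qquad a, b \in \bF_{q^n}^*.
\]
Assuming such a $V$ exists (i.e.\ $P_V$ has $q^k$ distinct roots in $\bF_{q^n}$), I will verify that its multiplicative stabilizer equals $\bF_q^*$ (so the orbit has the conjectured size $(q^n-1)/(q-1)$) and that any two distinct shifts intersect in dimension at most $1$ (so the minimum distance is $\ge 2k-2$).

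The orbit-and-distance analysis is the easy half. A direct substitution yields $P_{\alpha V}(x) = x^{q^k} + a\,\alpha^{q^k-q}x^q + b\,\alpha^{q^k-1}x$, so any nonzero common root $x$ of $P_V$ and $P_{\alpha V}$ satisfies the difference
\[
a\bigl(1 - \alpha^{q^k-q}\bigr)\, x^q \;+\; b\bigl(1 - \alpha^{q^k-1}\bigr)\, x \;=\; 0.
\]
Both coefficients vanish only when $\alpha^{q^k-q} = \alpha^{q^k-1} = 1$; dividing these forces $\alpha^{q-1} = 1$, i.e.\ $\alpha \in \bF_q^*$. For $\alpha \notin \bF_q^*$, at least one coefficient is nonzero, and the displayed equation forces $x^{q-1}$ to equal a fixed constant in $\bF_{q^n}$, yielding at most $q-1$ nonzero solutions. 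Thus $|V \cap \alpha V| \le q$ and $\dim(V \cap \alpha V) \le 1$, which simultaneously gives the desired orbit size and the minimum-distance lower bound.

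The hard half is the existence of admissible $a, b$: one needs $P_V$ to split completely in $\bF_{q^n}$. Factoring out $x$ and substituting $y = x^{q-1}$, this reduces to finding $a, b \in \bF_{q^n}^*$ for which $Q(y) := y^m + a y + b$, with $m := (q^k-1)/(q-1)$, has $m$ distinct roots in $\bF_{q^n}$ \emph{each of which is a $(q-1)$-st power there}. This is where the hypothesis $k \le n/2$ is expected to enter: since $m < q^k \le q^{n/2}$, there is enough room in $\bF_{q^n}$ to prescribe a structured $m$-element root set for $Q$ -- for instance, drawn from a subfield or from a coset of the norm-one subgroup -- and then recover $a$ and $b$ from the first two elementary symmetric functions of those roots. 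The constraint that every prescribed root be a $(q-1)$-st power is the delicate point, and I expect the bulk of the technical work to lie in arranging it; this is the main obstacle I foresee.
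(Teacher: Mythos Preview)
The statement you are attempting is labeled a \emph{conjecture} in the paper, and the paper does not prove it in full generality. What the paper establishes (its Theorem~\ref{theorem:explicit2kminus2}) is only the partial result that for each fixed $k$ the conclusion holds for infinitely many $n$: one simply takes the fixed trinomial $T(x)=x^{[k]}+x^{[1]}+x$, lets $n$ be the degree of its splitting field (or any multiple of it), and reads off the subspace $V$ as the kernel. There is no attempt in the paper to control $n$ beyond this, and in particular the hypothesis $k\le n/2$ is never used.

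Your ``easy half'' is correct and is essentially the paper's argument specialized to the trinomial shape: your stabilizer computation is exactly Lemma~\ref{lemma:distictCyclic}/Corollary~\ref{corollary:distinctCyclic} for $s=1$, and your intersection bound is the content of Lemma~\ref{lemma:intersectionLemma} and Corollary~\ref{corollary:distanceLowerBound} applied to polynomials with $\gap=k-1$. So on this part you and the paper agree.

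The gap is entirely in your ``hard half.'' You are aiming for something strictly stronger than what the paper proves: you want, for \emph{every} admissible pair $(n,k)$, coefficients $a,b\in\bF_{q^n}^*$ such that $x^{q^k}+ax^q+bx$ splits in $\bF_{q^n}$. Your reduction to finding $m=(q^k-1)/(q-1)$ distinct roots of $y^m+ay+b$ in $\bF_{q^n}$ that are all $(q-1)$-st powers is correct as a reformulation, but the sketch that follows (``prescribe a structured $m$-element root set \ldots and recover $a,b$ from the first two elementary symmetric functions'') does not work as stated: for a generic choice of an $m$-element set $S$, the polynomial $\prod_{s\in S}(y-s)$ will have \emph{all} its intermediate coefficients nonzero, not just the linear and constant ones, so you cannot in general realize $S$ as the root set of a trinomial $y^m+ay+b$. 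Forcing the vanishing of $m-2$ elementary symmetric functions simultaneously, while also keeping every root a $(q-1)$-st power in $\bF_{q^n}$, is precisely the open problem. The inequality $m<q^{n/2}$ gives a counting heuristic but no mechanism. As it stands, your proposal is a reasonable outline of where the difficulty lies, but it is not a proof, and it goes beyond what the paper itself achieves.
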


Notice that for $k>n/2+1$, a minimum distance of $2k-2$ is clearly not possible. The original conjecture~\cite{TMBR13-CyclicOrbitCodes} considered $k\le n/2$. However, an exhaustive search which was used in~\cite{GMT14-StabSub} proved that the conjecture is false for $n=8,k=4,q=2$. When $k<n/2$, it appears that there is enough flexibility that many such codes exist, while for $k=n/2$, such a code might not exist. Its existence depends on the existence of a subspace which forms a structure similar to a difference set~\cite{BEOVW}. In this paper it is proved that this conjecture is true for a given $k$
and infinitely many values of $n$, along with several options for explicit
constructions (see Theorem~\ref{theorem:explicit2kminus2}).
In~\cite{GMT14-StabSub,TMBR13-CyclicOrbitCodes} it was also
pointed out that it is not known how to construct cyclic codes
with multiple orbits. In the sequel we show that our techniques
can be useful for this purpose (see Lemma~\ref{lemma:multipleOrbitUsingPrimality}
and Construction~\ref{construction:Cd}).

One of the tools in our constructions is the so-called subspace polynomials,
which are a special case of linearized polynomials. Subspace polynomials form
an efficient method of representing subspaces, from which one can directly
deduce certain properties of the subspace which are not evident
in some other representations. These objects were studied in
the past for various purposes,
e.g., bounds on list-decoding of Reed-Solomon and rank-metric codes~\cite{A13-Bounds},
construction of affine dispersers~\cite{BK12-Affine},
and finding an element of high multiplicative order in a finite field~\cite{CQ12-HighOrder}.

The rest of this paper is organized as follows. Section
II will start with the known definition of subspace
polynomials. We continue to analyze properties of the
subspaces corresponding to the subspace polynomials,
in particular we examine distance properties induced by
cyclic and Frobenius shifts of these subspaces. Based on these properties, in Section~\ref{section:constructions} we consider constructions of optimal cyclic codes with degenerate orbits, and cyclic codes with full orbits. The main goal in constructing cyclic codes is to obtain as many orbits as possible
in the code. This task will be left for future work. In this work we consider
first the existence of cyclic codes with one full length orbit and cyclic
codes with multiple full length orbits. Conclusions are given in Section~\ref{section:Conclusions}.

\section{Subspaces and their Subspace Polynomials} \label{section:Perliminaries}
For the rest of this paper $k$ and $n$ will be positive integers
such that $2 < k < n$, and we denote $[\ell]\triangleq q^\ell$. We begin by defining linearized polynomials and subspace polynomials.

\begin{definition} \label{definition:linearizedPoly} A linearized polynomial was defined by Ore \cite{O33-onSpecial} as follows:
\[
P(x)\triangleq a_k \cdot x^{[k]}+a_{k-1}\cdot x^{[k-1]}+\cdots +a_1\cdot x^{[1]}+a_0\cdot x
\]
where the coefficients are in the finite field $\bF_{q^n}$.
\end{definition}
Linearized polynomials have numerous applications in classic coding theory (e.g., \cite[Chapter~4]{MS-77}). It is widely known that the roots of any linearized polynomial form a subspace in some extension of $\bF_{q^n}$ (seen as a vector space over $\bF_q$) and for every $V\in\grsmn{q}{k}{n}$, the polynomial $\prod_{v\in V}(x-v)$ is a linearized polynomial \cite[p. 118]{MS-77}. We will be particulary interested in linearized polynomials that have simple roots with respect to some field $\bF_{q^n}$.

\begin{definition} \label{definition:subspacePoly} \cite{BK12-Affine,BK10-Limits,
CQ12-HighOrder,A13-Bounds}
A monic linearized polynomial $P$ with coefficients in $\bF_{q^n}$ is called a subspace polynomial with respect to $\bF_{q^n}$ if the following equivalent conditions hold:
\begin{enumerate}
\item P divides $x^{[n]}-x$.
\item $P$ splits completely over $\bF_{q^n}$ and all its roots have multiplicity 1.
\end{enumerate}
\end{definition}

From now on, we shall omit the notation of $\bF_{q^n}$ whenever it is clear from context. The first two lemmas are trivial and well known. The simplicity of the roots of a subspace polynomial (and in particular, the simplicity of 0) gives rise to the following lemma.

\begin{lemma} \label{lemma:a0Not0}
In any subspace polynomial, the coefficient of $x$ is non-zero. Conversely, every linearized polynomial with non-zero coefficient of $x$ is a subspace polynomial in its splitting field.
\end{lemma}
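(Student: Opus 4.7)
The plan is to base both directions on one elementary observation about linearized polynomials in characteristic $p$: the formal derivative of $P(x)=\sum_{i=0}^{k} a_i x^{[i]}$ is simply the constant $a_0$. Indeed, for $i\ge 1$ we have $\frac{d}{dx}x^{[i]}=[i]\cdot x^{[i]-1}=0$ in $\bF_{q^n}$ (since $q$ is a power of the characteristic), so $P'(x)=a_0$. Once this is established, both directions follow from the standard fact that a polynomial and its derivative are coprime if and only if the polynomial is squarefree in its splitting field.

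For the forward direction, suppose $P$ is a subspace polynomial. By Definition~\ref{definition:subspacePoly}, all its roots in $\bF_{q^n}$ are simple; in particular $0$ is a simple root (it is always a root of a linearized polynomial, since $P(0)=0$). Were $a_0=0$, then $P'\equiv 0$, so $\gcd(P,P')=P$ would be non-constant and every root of $P$ would be a multiple root, contradicting simplicity. Hence $a_0\neq 0$.

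For the converse, let $P$ be a linearized polynomial with $a_0\neq 0$, and, after dividing by the leading coefficient, assume $P$ is monic. Since $P'(x)=a_0\in\bF_{q^n}^*$ is a nonzero constant, $\gcd(P,P')=1$, so $P$ is squarefree. Let $\bF_{q^N}$ be the splitting field of $P$ over $\bF_{q^n}$ (which, being a finite extension of $\bF_{q^n}$, has the form $\bF_{q^N}$ for some $N$ divisible by $n$). All $\deg P = [k]$ roots of $P$ are distinct and lie in $\bF_{q^N}$, so each one is a root of $x^{[N]}-x=\prod_{\alpha\in\bF_{q^N}}(x-\alpha)$ with multiplicity $1$. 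Therefore $P \mid x^{[N]}-x$, which, by Definition~\ref{definition:subspacePoly}(1), shows that $P$ is a subspace polynomial with respect to $\bF_{q^N}$.

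No step looks genuinely hard; the only subtlety worth flagging is the derivative computation, which hinges on working in characteristic $p$ with $q=p^s$ so that $[i]=q^i$ vanishes modulo the characteristic for every $i\ge 1$. Everything else is standard field theory and the already-cited fact that the roots of a linearized polynomial form an $\bF_q$-subspace of its splitting field.
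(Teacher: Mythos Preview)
Your proof is correct. The forward direction coincides with the paper's: both amount to observing that $0$ is a simple root of $P$ if and only if $a_0\neq 0$, which is exactly the derivative test at~$0$.

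For the converse, the two arguments diverge slightly. The paper invokes \cite[Theorem~3.50]{LN97-FiniteFields} (restated here as Theorem~\ref{theorem:lienarizedMultiplicity}), which asserts that all roots of a linearized polynomial share the same multiplicity; since $0$ is simple when $a_0\neq 0$, every root is simple. You instead use the derivative computation globally: $P'(x)=a_0$ is a nonzero constant, so $\gcd(P,P')=1$ and $P$ is squarefree outright. Your route is more self-contained --- it does not need the external structural theorem about linearized polynomials --- while the paper's route highlights a stronger fact (equal multiplicities, always a $q$-power) that is used again later in the paper. Either argument is perfectly adequate for the lemma as stated.
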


\begin{proof}
It is readily verified that 0 is a root of multiplicity 1 if and only if the coefficient of $x$ is non-zero. Therefore, if $P$ is a subspace polynomial, all of his roots are of multiplicity 1 (see Definition~\ref{definition:subspacePoly}), including 0. On the other hand, if $Q$ is a linearized polynomial with a non-zero coefficient of $x$, then by \cite[Theorem 3.50, p. 108]{LN97-FiniteFields}, all the roots of $Q$ have multiplicity~$1$.
\end{proof}

It also follows from Definition \ref{definition:subspacePoly} that for a given $V\in\grsmn{q}{n}{k}$ the polynomial $\prod_{v\in V}(x-v)$ is the unique subspace polynomial whose set of roots is $V$, which leads to the following lemma.
\begin{lemma} \label{lemma:subspaceEqual}
Two subspaces are equal if and only if their corresponding subspace polynomials are equal.
\end{lemma}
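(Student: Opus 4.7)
The plan is to exploit the explicit description of a subspace's subspace polynomial as $P_V(x) = \prod_{v\in V}(x-v)$, which the paragraph before the lemma justifies: this product is a linearized polynomial (by \cite[p.\ 118]{MS-77}), it is monic, and its roots are precisely the elements of $V$, each with multiplicity~$1$. By Definition~\ref{definition:subspacePoly}, $P_V$ is therefore a subspace polynomial, and it has root set exactly $V$.

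The forward direction of the lemma is immediate: if $U = V$, then $P_U(x) = \prod_{u\in U}(x-u) = \prod_{v\in V}(x-v) = P_V(x)$, so the associated subspace polynomials coincide.

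For the converse direction, first I would establish the uniqueness assertion stated in the sentence preceding the lemma: if $Q$ is \emph{any} subspace polynomial whose set of roots is $V$, then by Definition~\ref{definition:subspacePoly} $Q$ is monic and each of its roots has multiplicity exactly $1$, so $Q$ and $P_V$ have identical monic factorizations into linear factors over $\bF_{q^n}$; hence $Q = P_V$. Thus the subspace polynomial associated to a subspace is well-defined. Now suppose $U$ and $V$ have equal subspace polynomials, so $P_U = P_V$ as polynomials. Then their (multi)sets of roots agree, and since the roots of $P_U$ are exactly $U$ and the roots of $P_V$ are exactly $V$, we conclude $U = V$.

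There is essentially no obstacle here: the entire content of the lemma is unpacking that a monic polynomial with distinct roots is uniquely determined by the \emph{set} of its roots, combined with the already-cited fact that $\prod_{v\in V}(x-v)$ is a linearized (hence subspace) polynomial. The only point that requires any care is making explicit that Definition~\ref{definition:subspacePoly} forces the multiplicity-$1$ condition on every subspace polynomial, which is what allows us to recover $V$ from $P_V$ as a \emph{set} rather than a multiset.
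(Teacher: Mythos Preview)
Your proof is correct and matches the paper's own treatment: the paper does not supply a formal proof block for this lemma at all, but simply records in the sentence immediately preceding it that $\prod_{v\in V}(x-v)$ is the unique subspace polynomial whose root set is~$V$, from which the lemma follows. Your argument spells out exactly this uniqueness via the monic-with-simple-roots characterization, which is precisely the content the paper leaves implicit.
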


Lemma~\ref{lemma:subspaceEqual} allows us to denote by $P_V$ the unique subspace polynomial corresponding to a given subspace $V$.
\begin{example}
Let $t$ be a positive integer such that $t\vert n$. It is known that $\bF_{q^t}$ is a subfield (in particular, a subspace) of $\bF_{q^n}$. The subspace polynomial of $\bF_{q^t}$ is $P_{\bF_{q^t}}(x)=x^{[t]}-x$.
\end{example}
The connection between linearized polynomials and subspace polynomial is given by the following two claims.

\begin{theorem}\cite[Theorem 3.50, p. 108]{LN97-FiniteFields}\label{theorem:lienarizedMultiplicity}
If $P$ is a linearized polynomial whose splitting field is $\bF_{q^n}$, then each root of $P$ in~$\bF_{q^n}$ has the same multiplicity, which is a non-negative power of $q$, and the roots form a linear subspace of $\bF_{q^n}$.
\end{theorem}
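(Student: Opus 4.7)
The plan is to break the proof into two parts: first show that the roots of $P$ form an $\bF_q$-subspace of $\bF_{q^n}$, and then show that all roots share a common multiplicity equal to $[j]$ for some $j \ge 0$.

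For the subspace claim, I would exploit the $\bF_q$-linearity built into the shape of $P$. In characteristic $p$ the Frobenius identity gives $(u+v)^{[i]} = u^{[i]} + v^{[i]}$, and for $c \in \bF_q$ one has $c^{[i]} = c$ so that $(cu)^{[i]} = c\,u^{[i]}$. Multiplying by $a_i$ and summing over $i$ yields
\[
P(au+bv) = aP(u) + bP(v)
\]
for every $u,v$ in the splitting field and every $a,b \in \bF_q$. Hence the root set is closed under $\bF_q$-linear combinations and is an $\bF_q$-subspace of $\bF_{q^n}$.

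For the multiplicity claim, the main tool is the formal derivative. Since $\frac{d}{dx}x^{[i]} = q^i x^{[i]-1} = 0$ in characteristic $p$ for every $i \ge 1$, term-by-term differentiation gives $P'(x) = a_0$, a constant. If $a_0 \ne 0$ then $\gcd(P, P') = 1$, every root of $P$ is simple, and the common multiplicity is $1 = [0]$. Otherwise, let $j \ge 1$ be the smallest index with $a_j \ne 0$. Because Frobenius is a bijection on the finite field $\bF_{q^n}$, for each $i \ge 0$ one can choose $b_i \in \bF_{q^n}$ with $b_i^{[j]} = a_{i+j}$; setting $R(x) \triangleq \sum_{i \ge 0} b_i x^{[i]}$ and iterating the additive $q$-th power identity a total of $j$ times gives
\[
R(x)^{[j]} = \sum_{i \ge 0} b_i^{[j]} x^{[i+j]} = \sum_{m \ge j} a_m x^{[m]} = P(x).
\]
Since $b_0 \ne 0$, applying the previous case to $R$ shows that $R$ has only simple roots; therefore $P = R^{[j]}$ has the same root set as $R$ and each root of $P$ occurs with multiplicity exactly $[j] = q^j$.

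The only delicate point is justifying the factorization $R^{[j]} = P$, which rests on the additivity of Frobenius and on the surjectivity of Frobenius on $\bF_{q^n}$ (needed to produce the $[j]$-th roots $b_i$ inside $\bF_{q^n}$). Both are classical properties of finite fields of characteristic $p$, so no substantive obstacle is expected.
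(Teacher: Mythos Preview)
The paper does not supply its own proof of this theorem: it is quoted verbatim as \cite[Theorem~3.50]{LN97-FiniteFields} and used as a black box, so there is nothing in the paper to compare against.

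That said, your argument is correct and is essentially the classical proof one finds in Lidl--Niederreiter. The $\bF_q$-linearity of $P$ (via Frobenius additivity and $c^{[i]}=c$ for $c\in\bF_q$) gives the subspace structure immediately. For the multiplicity, the derivative computation $P'(x)=a_0$ handles the separable case, and in the inseparable case you correctly peel off the $q^j$-th power: writing $P(x)=R(x)^{[j]}$ with $b_i^{[j]}=a_{i+j}$ works because the $q^j$-th power map is the Frobenius iterate, hence additive on polynomials and bijective on the coefficient field~$\bF_{q^n}$ (this is where the hypothesis that the coefficients lie in a finite field is used). Since $b_0\neq 0$, the polynomial $R$ is separable, and $P=R^{[j]}$ forces every root of $P$ to have multiplicity exactly~$q^j$. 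No gaps.
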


\begin{lemma}
If $P(x)$ is a linearized polynomial with a leading coefficient\footnote{The leading coefficient of a polynomial is the coefficient of the monomial with the highest degree.} $\alpha\ne 0$ and the splitting field of $P(x)$ is $\bF_{q^n}$, then $P(x)=\alpha P_V(x)^{[t]}$  for some subspace $V$ in $\bF_{q^n}$ and some $t\in\bN$.
\end{lemma}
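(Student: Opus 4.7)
The plan is to combine Theorem~\ref{theorem:lienarizedMultiplicity} with a direct product-expansion argument. Since the splitting field of $P$ is $\bF_{q^n}$, all roots of $P$ lie in $\bF_{q^n}$, and Theorem~\ref{theorem:lienarizedMultiplicity} tells us that there is a common multiplicity $q^t$ (for some $t\in\bN$) shared by every root, and that the set $V$ of distinct roots of $P$ forms an $\bF_q$-linear subspace of $\bF_{q^n}$. This $V$ is exactly the subspace we will use; its subspace polynomial $P_V(x)=\prod_{v\in V}(x-v)$ is well-defined by Lemma~\ref{lemma:subspaceEqual}.

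Next, I would identify $P(x)$ with its factored form. Because $P$ splits completely over $\bF_{q^n}$ and its leading coefficient is $\alpha$, we can write
\[
P(x)\;=\;\alpha\prod_{v\in V}(x-v)^{q^t}
\;=\;\alpha\Bigl(\prod_{v\in V}(x-v)\Bigr)^{q^t}
\;=\;\alpha\, P_V(x)^{q^t},
\]
where the middle equality is just the elementary fact that $\prod_i a_i^m=(\prod_i a_i)^m$. This gives the desired identity.

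The only thing worth checking carefully is that the expression $\alpha P_V(x)^{q^t}$ is consistent with $P$ being a linearized polynomial; this is automatic, since in characteristic $p$ the map $y\mapsto y^{q^t}$ is the Frobenius, hence additive, so $(\sum_i a_i x^{q^i})^{q^t}=\sum_i a_i^{q^t}x^{q^{i+t}}$, which is again of linearized form. There is no real obstacle in the argument; the whole statement is essentially a repackaging of Theorem~\ref{theorem:lienarizedMultiplicity} together with the definition of $P_V$, and the only point one must take care with is to observe that the $q^t$-th power can be pulled outside the product thanks to commutativity of multiplication (no characteristic argument is needed for this step itself).
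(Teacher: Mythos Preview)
Your proof is correct and follows essentially the same route as the paper: invoke Theorem~\ref{theorem:lienarizedMultiplicity} to obtain the common multiplicity $q^t$ and the subspace $V$ of roots, then factor $P(x)=\alpha\prod_{v\in V}(x-v)^{q^t}=\alpha\bigl(\prod_{v\in V}(x-v)\bigr)^{q^t}=\alpha P_V(x)^{q^t}$. The additional remarks about Frobenius preserving linearized form are fine but not needed for the statement as written.
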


\begin{proof}
According to Theorem~\ref{theorem:lienarizedMultiplicity}, all the roots of $P$ are of the same multiplicity $q^t$ for some $t\in\bN$, and these roots form a subspace $V$ of $\bF_{q^n}$. Hence,
\begin{eqnarray*}
P(x)=\alpha\prod_{v\in V} (x-v)^{[t]}
=\alpha\left( \prod_{v\in V}(x-v)\right)^{[t]}=\alpha P_V(x)^{[t]}.
\end{eqnarray*}\end{proof}

In the sequel, we show several connections between the coefficients of subspace polynomials and properties of the respective subspaces. One of the main tools in our analysis is the difference between the indices of the two topmost non-zero coefficients:

\begin{definition}\label{definition:gap} For $V\in\grsmn{q}{n}{k}$ and $P_V(x)=x^{[k]}+\sum_{j=0}^{i}\alpha_j x^{[j]}$, where $\alpha_i\ne 0$, let $\gap(V)\triangleq k-i$.
\end{definition}

As the following lemma illustrates, the gap of two subspace induces a lower bound on their related distance.

\begin{lemma} \label{lemma:intersectionLemma}
If $V \in \grsmn{q}{n}{k_1}$ and $U \in \grsmn{q}{n}{k_2}$ are two distinct subspaces such that $k_1 \le k_2$ and
\begin{eqnarray*}
P_V (x) &=& x^{[k_1]}+\sum_{j=0}^{t}\alpha_j x^{[j]}\\ 
P_U (x) &=& x^{[k_2]}+\sum_{j=0}^{s}\beta_j x^{[j]},
\end{eqnarray*} 
such that $\alpha_t\ne 0$ and $\beta_s\ne 0$, then $\dim\left(U \cap V\right) \le \max(s,t+k_2-k_1)$.
\end{lemma}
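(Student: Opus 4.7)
The natural approach is to build a nonzero polynomial of controlled degree that vanishes on every element of $W \triangleq U \cap V$, and then use the fact that a univariate polynomial has at most as many roots as its degree. Since every element $w \in W$ is a root of both $P_V$ and $P_U$, any linear combination of these polynomials (with polynomial coefficients) also vanishes on $W$. The trick is to pick a combination whose top term cancels, shifting the effective degree down to the level of the second coefficients, where the gaps $k_1-t$ and $k_2-s$ come into play.

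Concretely, I would define
\[
Q(x) \;\triangleq\; P_U(x) \;-\; \bigl(P_V(x)\bigr)^{[k_2-k_1]},
\]
noting that raising $P_V$ to the $q^{k_2-k_1}$-th power is just applying the Frobenius $k_2-k_1$ times to each coefficient and raising each monomial $x^{[j]}$ to $x^{[j+k_2-k_1]}$. The leading monomials $x^{[k_2]}$ of $P_U$ and of $P_V^{[k_2-k_1]}$ cancel. The remaining candidate top monomials are $x^{[s]}$ (from $P_U$) and $x^{[t+k_2-k_1]}$ (from $P_V^{[k_2-k_1]}$), and consequently
\[
\deg Q \;\le\; q^{\max(s,\, t+k_2-k_1)}.
\]
Any $w\in W$ satisfies $P_U(w)=0$ and $P_V(w)=0$, hence $Q(w)=0$, so every element of $W$ is a root of $Q$.

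The one point to verify carefully, and the only mild obstacle, is that $Q$ is not the zero polynomial—otherwise the degree bound is vacuous. If $k_1 = k_2$, then $Q = P_U - P_V$, and $Q \equiv 0$ would imply $P_U = P_V$, forcing $U = V$ by Lemma~\ref{lemma:subspaceEqual}, contradicting distinctness. If $k_1 < k_2$, then $(P_V)^{[k_2-k_1]}$ has each root with multiplicity $q^{k_2-k_1} > 1$, whereas $P_U$, being a subspace polynomial, has all simple roots (Definition~\ref{definition:subspacePoly}); the two polynomials therefore cannot coincide, so $Q \not\equiv 0$.

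Putting this together, $|W| \le \deg Q \le q^{\max(s,\,t+k_2-k_1)}$, and since $W$ is an $\bF_q$-subspace we have $|W| = q^{\dim W}$, yielding $\dim(U\cap V) \le \max(s,\,t+k_2-k_1)$, as desired. The whole argument is short; the only real design choice is selecting the exponent $k_2-k_1$ so that the top terms cancel while the next terms reveal the gap information encoded in $s$ and $t$.
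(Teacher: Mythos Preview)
Your proof is correct and essentially identical to the paper's: both form $Q(x) = P_U(x) - P_V(x)^{[k_2-k_1]}$ so that the leading terms cancel, giving $\deg Q \le q^{\max(s,\, t+k_2-k_1)}$, and then bound $\dim(U\cap V)$ accordingly. The paper routes the final step through $\gcd(P_V,P_U)$ and one Euclidean reduction rather than your direct root count, and it leaves the check $Q\not\equiv 0$ implicit, but the substance is the same.
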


\begin{proof}
According to the properties of $\bF_{q^n}$, for all $\alpha,\beta\in\bF_{q^n}$ and for all $i\in\bN$ we have that $(\alpha+\beta)^{[i]}=\alpha^{[i]}+\beta^{[i]}$, and therefore
\begin{eqnarray*}
P_V(x)^{[k_2-k_1]}=x^{[k_2]}+\sum_{j=0}^t \alpha_j^{[k_2-k_1]}x^{[j+k_2-k_1]}.
\end{eqnarray*}
Since the polynomials $P_V,P_V^{[k_2-k_1]}$ have the same set of roots, and since the roots of~$P_U$ are simple, it follows that\footnote{$\gcd(s,t)$ stands for the greatest common denominator of the elements $s,t$.}$\gcd(P_V,P_U)=\gcd(P_V^{[k_2-k_1]},P_U)$. Hence, 
if $Q(x) \triangleq P_U(x)-P_V(x)^{[k_2-k_1]}$ then
\begin{eqnarray*}
\gcd(P_V,P_U)&=&\gcd(P_V^{[k_2-k_1]},P_U)\\
&=&\gcd(P_V^{[k_2-k_1]},P_U(mod~P_V^{[k_2-k_1]}))\\
&=&\gcd(P_V^{[k_2-k_1]},~Q(mod~P_V^{[k_2-k_1]})).
\end{eqnarray*}
Since $\deg Q\le \max([s],[t+k_2-k_1])$, it follows that \[\log_q \deg \gcd(P_V^{[k_2-k_1]}, Q(mod~P_V^{[k_2-k_1]})) \le \max(s,t+k_2-k_1),\]and hence $\dim(U\cap V)\le \max(s,t+k_2-k_1)$.
\end{proof}

%\begin{lemma} \label{lemma:intersectionLemma}
%If $U,V \in \grsmn{q}{n}{k}$ are two distinct subspaces such that
%\begin{eqnarray*}
%P_V (x) &=& x^{[k]}+\sum_{j=0}^{i}\alpha_j x^{[j]}\\ 
%P_U (x) &=& x^{[k]}+\sum_{j=0}^{i}\beta_j x^{[j]}
%\end{eqnarray*} 
%then $\dim\left(U \cap V\right) \le i$.
%\end{lemma}
%
%\begin{proof}
%Since all roots of a subspace polynomial are simple, $\dim (U\cap V)=\log _q \deg \gcd(P_U,P_V)$. If $Q(x) \triangleq P_U(x)-P_V(x)$ then
%\begin{eqnarray*}
%\gcd(P_U,P_V) &=& \gcd(P_V, P_U(mod~P_V))\\
%&=& \gcd(P_V, Q~(mod~P_V)).
%\end{eqnarray*}
%Since $\deg Q \le [i]$, it follows that $\log_q \deg \gcd(P_V, Q(mod~P_V)) \le i$, and hence $\dim(U\cap V)\le i$.
%\end{proof}

A special case of Lemma~\ref{lemma:intersectionLemma}, where the subspaces $U$ and $V$ are of the same dimension $k$, provides the following useful corollaries.

\begin{corollary} \label{corollary:intersectionUpperBound}
If $U,V\in\grsmn{q}{n}{k}$ then $\dim(U\cap V)\le k-\min (\gap(U),\gap(V))$.
\end{corollary}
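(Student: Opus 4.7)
The plan is to deduce this as an immediate specialization of Lemma~\ref{lemma:intersectionLemma} to the equidimensional case $k_1=k_2=k$. If $U=V$ the statement is vacuous in the sense that we may assume $U\ne V$ (otherwise $\dim(U\cap V)=k$ and the inequality reduces to $\min(\gap(U),\gap(V))\le 0$, which contradicts the fact that the gap of any subspace polynomial is strictly positive; hence the corollary is implicitly about distinct subspaces), so we assume $U\ne V$ and invoke the lemma.

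With $k_1=k_2=k$ we have $k_2-k_1=0$, so the conclusion of Lemma~\ref{lemma:intersectionLemma} becomes
\[
\dim(U\cap V)\le \max(s,t),
\]
where $t$ and $s$ are the indices of the second-highest nonzero terms of $P_V$ and $P_U$ respectively. By Definition~\ref{definition:gap} this means exactly $t=k-\gap(V)$ and $s=k-\gap(U)$, whence
\[
\max(s,t)=\max\bigl(k-\gap(U),\,k-\gap(V)\bigr)=k-\min\bigl(\gap(U),\gap(V)\bigr).
\]

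The only step requiring any care is the translation between the ``second-highest exponent'' indices $s,t$ in the hypothesis of the lemma and the ``gap'' notation introduced in Definition~\ref{definition:gap}; once that dictionary is set up the corollary falls out without further work. There is no real obstacle, since all the heavy lifting (the gcd manipulation with $P_V^{[k_2-k_1]}$ and the simplicity of roots of subspace polynomials) has already been done inside Lemma~\ref{lemma:intersectionLemma}.
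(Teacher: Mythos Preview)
Your proposal is correct and matches the paper's own treatment: the paper simply declares Corollary~\ref{corollary:intersectionUpperBound} to be the special case $k_1=k_2=k$ of Lemma~\ref{lemma:intersectionLemma} without writing out any details, and your argument is exactly that specialization together with the dictionary $t=k-\gap(V)$, $s=k-\gap(U)$. Your remark that the corollary tacitly assumes $U\ne V$ (inherited from the lemma's hypothesis of distinct subspaces) is a valid clarification the paper leaves implicit.
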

\begin{corollary} \label{corollary:distanceLowerBound}
If $U,V\in\grsmn{q}{n}{k}$ then $d(U,V)\ge 2\min (\gap(U),\gap(V))$.
\end{corollary}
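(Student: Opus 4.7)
The plan is essentially to read off this corollary as an immediate consequence of Corollary~\ref{corollary:intersectionUpperBound} combined with the definition of the subspace metric. Recall that for any two subspaces $U,V \in \cP_q(n)$ we have $d(U,V) = \dim U + \dim V - 2\dim(U \cap V)$, and since here both subspaces belong to $\grsmn{q}{n}{k}$, this specializes to $d(U,V) = 2k - 2\dim(U \cap V)$.

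The first step is to invoke Corollary~\ref{corollary:intersectionUpperBound}, which gives the bound $\dim(U \cap V) \le k - \min(\gap(U), \gap(V))$. The second step is simply to substitute this into the specialized distance formula: rearranging yields $d(U,V) \ge 2k - 2\bigl(k - \min(\gap(U), \gap(V))\bigr) = 2\min(\gap(U), \gap(V))$, which is precisely the claim.

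There is essentially no obstacle here, since all of the analytic work has already been carried out in Lemma~\ref{lemma:intersectionLemma} and its Corollary~\ref{corollary:intersectionUpperBound}. The only point worth double-checking is that the specialization of Lemma~\ref{lemma:intersectionLemma} to the equidimensional case (i.e. taking $k_1 = k_2 = k$, so that the $+k_2-k_1$ shift vanishes) really does give $\max(s,t)$, which in turn becomes $k - \min(\gap(U),\gap(V))$ after translating from the leading-non-zero indices $s,t$ back to the gaps $k-s$ and $k-t$. Once this bookkeeping is observed, the corollary follows in a single line.
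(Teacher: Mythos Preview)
Your proof is correct and matches the paper's approach exactly: the paper states this corollary without proof, treating it as an immediate consequence of Corollary~\ref{corollary:intersectionUpperBound} together with the identity $d(U,V)=2k-2\dim(U\cap V)$ for $U,V\in\grsmn{q}{n}{k}$. Note only that both corollaries implicitly inherit the hypothesis $U\ne V$ from Lemma~\ref{lemma:intersectionLemma}; without it the inequality $0=d(V,V)\ge 2\gap(V)\ge 2$ would be false.
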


\begin{remark}\label{remark:gapNotTight}
Corollary~\ref{corollary:distanceLowerBound} is not tight, i.e., there exists subspaces $U,V\in\grsmn{q}{n}{k}$ where $\gap(V)=gap(U)=1$ and $d(U,V)=2k-2$. For example, let $\gamma$ be a root of $x^7+x+1=0$, and use this primitive polynomial to generate $\bF_{2^7}$. The following polynomials are subspace polynomials of $U,V\in\grsmn{2}{7}{3}$ for which $\gap(U)=\gap(V)=1$ and $d(U,V)=2\cdot 3-2\cdot 1=4$. In particular, $U$ and~$V$ are cyclic shifts of each other.
\begin{eqnarray*}
P_U(x) &=& x^{[3]} + x^{[2]} + (\gamma^6 + \gamma^4 + \gamma^3 + \gamma + 1)x^{[1]} + (\gamma^3 + \gamma^2 + \gamma + 1)x\\
P_V(x) &=& x^{[3]} + (\gamma^2 + 1)x^{[2]} + (\gamma^6 + \gamma^4 + \gamma + 1)x^{[1]} + (\gamma^5 + \gamma^4 + \gamma)x
\end{eqnarray*}
\end{remark}

Aside from cyclic shifts we will also use the well known \textit{Frobenius mapping} $F^i$
as a method to increase the size of the codes.
For an element $\alpha\in \bF_{q^n}$ and $i\in\{0,\ldots,n-1\}$,
the ${\bF_q\mbox{-}}$mapping $F^i$ is defined as $F^i(\alpha)=\alpha^{[i]}$
(see \cite[p. 75]{LN97-FiniteFields}). For a subspace~$V$ and
$i\in\{0,\ldots,n-1\}$ the $i$th \textit{Frobenius shift} of~$V$
is defined as ${F^i(V)\triangleq \{v^{[i]}~\vert~v\in V\}}$.
Since the function~$F^i$ is an automorphism, it follows
that the set $F^i(V)$ is a subspace of the same dimension as~$V$.
We now characterize the subspace polynomials of the subspaces resulting from these mappings.

\begin{lemma} \label{lemma:cyclicShiftPoly}
If $V\in \grsmn{q}{n}{k}$ and $\alpha \in \bF_{q^n}^{*}$ then $P_{\alpha V}(x) = \alpha ^{[k]}\cdot P_V (\alpha^{-1}x)$. That is, if $P_V(x) = x^{[k]} +\sum_{j=0}^i \alpha_j x^{[j]}$ then $P_{\alpha V}(x)=x^{[k]} +\sum_{j=0}^i \alpha^{[k]-[j]} \alpha_j x^{[j]}$.
\end{lemma}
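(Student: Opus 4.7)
The plan is to exploit Lemma~\ref{lemma:subspaceEqual}: it suffices to show that the polynomial $Q(x) \triangleq \alpha^{[k]} \cdot P_V(\alpha^{-1} x)$ is the subspace polynomial of $\alpha V$, i.e., that $Q$ is monic, linearized, and has root-set precisely $\alpha V$ with all roots simple.

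First I would verify that $Q$ is monic and linearized, and simultaneously read off its coefficients. Writing $P_V(x) = x^{[k]} + \sum_{j=0}^{i} \alpha_j x^{[j]}$ and substituting $\alpha^{-1}x$ yields
\begin{eqnarray*}
Q(x) = \alpha^{[k]}\Bigl( \alpha^{-[k]} x^{[k]} + \sum_{j=0}^i \alpha_j \alpha^{-[j]} x^{[j]} \Bigr) = x^{[k]} + \sum_{j=0}^i \alpha^{[k]-[j]}\alpha_j x^{[j]},
\end{eqnarray*}
which is manifestly monic and a linear combination of $q$-power monomials, hence linearized.

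Next I would check that every element of $\alpha V$ is a root. For any $v \in V$, $Q(\alpha v) = \alpha^{[k]} P_V(\alpha^{-1} \cdot \alpha v) = \alpha^{[k]} P_V(v) = 0$ since $v$ is a root of $P_V$. Thus $\alpha V$ is contained in the root set of $Q$. Because $\alpha \in \bF_{q^n}^*$ and $V$ is a $k$-dimensional $\bF_q$-subspace, $|\alpha V| = q^k = [k] = \deg Q$, so these must be all the roots of $Q$ and each has multiplicity $1$. In particular the roots all lie in $\bF_{q^n}$, so $Q$ splits completely over $\bF_{q^n}$ with simple roots, i.e., $Q$ is a subspace polynomial with respect to $\bF_{q^n}$ whose root set is $\alpha V$. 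By Lemma~\ref{lemma:subspaceEqual}, $Q = P_{\alpha V}$, which is exactly the claimed identity, and the explicit form of the coefficients follows from the expansion above.

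There is no serious obstacle here; the only small point requiring care is verifying that $Q$ has exactly $[k]$ distinct roots rather than merely containing $\alpha V$ in its root set, which is handled by the degree count together with the bijectivity of scaling by $\alpha$.
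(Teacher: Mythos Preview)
Your proof is correct. It differs from the paper's argument in structure: the paper computes $P_{\alpha V}$ directly from the product definition
\[
P_{\alpha V}(x)=\prod_{u\in\alpha V}(x-u)=\prod_{v\in V}(x-\alpha v)=\alpha^{[k]}\prod_{v\in V}(\alpha^{-1}x-v)=\alpha^{[k]}P_V(\alpha^{-1}x),
\]
whereas you run the identification in the opposite direction---defining $Q(x)=\alpha^{[k]}P_V(\alpha^{-1}x)$, checking it is monic linearized with simple root set exactly $\alpha V$, and then invoking Lemma~\ref{lemma:subspaceEqual} to conclude $Q=P_{\alpha V}$. The paper's route is marginally shorter since it never needs the uniqueness lemma or the degree-counting step; your route, on the other hand, makes explicit why $Q$ is a subspace polynomial with respect to $\bF_{q^n}$ and recovers the coefficient formula along the way rather than as a final expansion. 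Both are equally elementary and neither has any gap.
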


\begin{proof}
By definition, 
\begin{eqnarray*}
P_{\alpha V} (x) &=& \prod_{u\in\alpha V}(x-u)\\
&=&\prod_{v\in V}(x-\alpha v)\\
&=&\alpha^{[k]}\prod_{v\in V}(\alpha^{-1}x-v)\\
&=&\alpha^{[k]}\cdot P_V(\alpha^{-1}x)\\
&=&x^{[k]} +\sum_{j=0}^i \alpha^{[k]-[j]} \alpha_j x^{[j]}.
\end{eqnarray*}
\end{proof}

\begin{lemma} \label{lemma:FrobeniusShiftPoly}
If $V\in \grsmn{q}{n}{k}$ and $P_V(x) = x^{[k]}+\sum _{j=0}^i \alpha _j x^{[j]}$ then for all $s\in \{0,\ldots,n-1\}$, $P_{F^s(V)}(x) = x^{[k]}+\sum _{j=0}^i F^s(\alpha _j) x^{[j]}$.
\end{lemma}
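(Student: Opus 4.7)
The plan is to exploit the Frobenius identity $(a+b)^{q^s}=a^{q^s}+b^{q^s}$, valid over $\bF_{q^n}$ of characteristic $p$ with $q$ a power of $p$, together with the factored form $P_V(x)=\prod_{v\in V}(x-v)$. Since every operation in sight is polynomial, the core of the argument is just raising $P_V(x)$ to the $q^s$-th power and reading off the coefficients in two different ways.

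First, I would compute $P_V(x)^{q^s}$ termwise. Because the Frobenius commutes with addition over $\bF_{q^n}$, and since $(x^{[j]})^{q^s}=x^{[j+s]}$, we get
\[
P_V(x)^{q^s}=x^{[k+s]}+\sum_{j=0}^{i}\alpha_j^{q^s}x^{[j+s]}=x^{[k+s]}+\sum_{j=0}^{i}F^s(\alpha_j)\,x^{[j+s]}.
\]

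Second, I would compute $P_V(x)^{q^s}$ directly from the product representation:
\[
P_V(x)^{q^s}=\left(\prod_{v\in V}(x-v)\right)^{q^s}=\prod_{v\in V}(x^{q^s}-v^{q^s})=\prod_{u\in F^s(V)}(x^{[s]}-u)=P_{F^s(V)}(x^{[s]}).
\]
Here I use that $F^s$ is a bijection on $\bF_{q^n}$, so $F^s(V)\in\grsmn{q}{n}{k}$ and $P_{F^s(V)}$ is well-defined and monic of degree $q^k$. Writing $P_{F^s(V)}(y)=y^{[k]}+\sum_{j=0}^{i'}\beta_j y^{[j]}$ with $\beta_{i'}\ne 0$ and substituting $y=x^{[s]}$ gives
\[
P_{F^s(V)}(x^{[s]})=x^{[k+s]}+\sum_{j=0}^{i'}\beta_j\,x^{[j+s]}.
\]
Equating this with the expression from the first step and matching coefficients of the (distinct) monomials $x^{[j+s]}$ yields $i'=i$ and $\beta_j=F^s(\alpha_j)$ for every $j$, which is exactly the claim.

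There is no serious obstacle here; the only bookkeeping point worth flagging is to confirm that $F^s(V)$ is still a $k$-dimensional subspace (immediate from $F^s$ being an $\bF_q$-linear automorphism), so that $P_{F^s(V)}$ is monic of degree $q^k$ and the coefficient match is unambiguous. Non-vanishing of the coefficients is also preserved because $F^s$ is injective, so the index $i$ of the topmost nonzero lower coefficient transfers identically.
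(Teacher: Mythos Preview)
Your proof is correct, but it takes a somewhat different route from the paper. The paper argues pointwise: it picks $u=F^s(v)\in F^s(V)$ and checks directly that $u$ is a root of the candidate polynomial, using $F^s(v)^{[j]}=F^s(v^{[j]})$ and the multiplicativity and additivity of $F^s$ to factor $F^s$ out of the entire expression, obtaining $F^s(P_V(v))=F^s(0)=0$; a degree count then finishes. You instead establish the global polynomial identity $P_V(x)^{q^s}=P_{F^s(V)}(x^{[s]})$ by raising the factored form to the $q^s$-th power, and then read off the coefficients. Both arguments rest on the same Frobenius commutation, but yours yields the auxiliary identity $P_V(x)^{q^s}=P_{F^s(V)}(x^{[s]})$ explicitly, which can be handy elsewhere; the paper's version is slightly shorter because it avoids introducing the substitution $y=x^{[s]}$ and the coefficient-matching step.
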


\begin{proof}
If $s \in \{0,\ldots,n-1\}$ and $u\in F^s(V)$ then $u=F^s(v)$ for some $v\in V$. Since $F^s$ is an automorphism, it follows that
\begin{eqnarray*}
u^{[k]}+\sum _{j=0}^i F^s(\alpha _j) u^{[j]} &=&  F^s(v)^{[k]}+\sum _{j=0}^i F^s(\alpha _j) F^s(v)^{[j]}\\
&=& F^s(v^{[k]})+\sum_{j=0}^iF^s(\alpha_j v^{[j]})=F^s\left( v^{[k]} + \sum _{j=0}^i \alpha_j v^{[j]}\right)\\
&=&F^s\left(P_V(v)\right)=F^s\left(\prod_{w\in V}(v-w)\right)= F^s(0) = 0.
\end{eqnarray*}
Therefore all elements of $F^s(V)$ are roots of $x^{[k]}+\sum _{j=0}^i F^s(\alpha _j) x^{[j]}$. Since the degree of this polynomial is $[k]$, the claim follows.
\end{proof}

The next lemma shows a connection between the coefficients of
the subspace polynomial of a given subspace $V\in\grsmn{q}{n}{k}$ and the
number of its distinct cyclic shifts. To formulate our claim,
we need the following equivalence relation.

\begin{definition}\label{definition:equivalenceRelation}
For $\alpha,\beta \in \bF_{q^n}^*$ and an integer $t$ which divides $n$, the equivalence relation $\sim_t$ is defined as follows
\[
\alpha \sim_t \beta \iff \frac{\alpha}{\beta} \in \bF_{q^t}.
\]
\end{definition}

Clearly, if $\alpha \sim_t \beta$ then $\alpha \in \beta \bF_{q^t}^* \cap \alpha \bF_{q^t}^*$, and since all the cyclic shifts of $\bF_{q^t}^*$ in $\bF_{q^n}^*$ are disjoint, it follows that $\beta \bF_{q^t}^*=\alpha \bF_{q^t}^*$. Hence, the equivalence classes under this relation are all the cyclic shifts of $\bF_{q^t}^*$ in $\bF_{q^n}^*$. Therefore, there are exactly $\frac{q^n-1}{q^t-1}$ equivalence classes of $\sim_t$ , each of which is of size $q^t-1$.

\begin{lemma} \label{lemma:distictCyclic}
Let $V\in\grsmn{q}{n}{k}$ and $P_V(x)=x^{[k]}+\sum_{j=0}^i \alpha_j x^{[j]}$. If $\alpha_s \ne 0$ for some $s\in \left\{1,\ldots,i\right\}$ and $\gcd(s,n)=t$ then $\alpha V\ne \beta V$ for all $\alpha,\beta \in \bF_{q^n}^*$ such that $\alpha\nsim _t \beta$.
\end{lemma}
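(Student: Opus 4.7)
The plan is to work contrapositively: assume $\alpha V = \beta V$ and deduce $\alpha \sim_t \beta$. By Lemma~\ref{lemma:subspaceEqual}, the equality $\alpha V = \beta V$ is equivalent to $P_{\alpha V} = P_{\beta V}$. Applying Lemma~\ref{lemma:cyclicShiftPoly} to both sides, this amounts to the coefficient-wise identities $\alpha^{[k]-[j]}\alpha_j = \beta^{[k]-[j]}\alpha_j$ for every $j \in \{0,\ldots,i\}$. Equivalently, letting $\gamma \triangleq \alpha/\beta \in \bF_{q^n}^{*}$, we get $\gamma^{[k]-[j]} = 1$ for every index $j$ such that $\alpha_j \ne 0$.

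The key step is to isolate two such indices. By Lemma~\ref{lemma:a0Not0}, $\alpha_0 \ne 0$, which gives $\gamma^{[k]-1} = 1$. By hypothesis, $\alpha_s \ne 0$, which yields $\gamma^{[k]-[s]} = 1$. Dividing one relation by the other eliminates the $[k]$-exponent, producing
\[
\gamma^{([k]-1) - ([k]-[s])} = \gamma^{[s]-1} = 1.
\]

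Now I would combine this with the fact $\gamma \in \bF_{q^n}^{*}$, so $\gamma^{q^n - 1} = 1$ as well. Standard finite-field arithmetic gives $\gcd(q^s - 1, q^n - 1) = q^{\gcd(s,n)} - 1 = q^t - 1$, and hence $\gamma^{q^t - 1} = 1$. This forces $\gamma \in \bF_{q^t}^{*}$, i.e., $\alpha \sim_t \beta$, which contradicts the assumption $\alpha \nsim_t \beta$.

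The only non-routine step is recognizing that the two exponent relations must be combined by subtraction (rather than trying to use a single coefficient index) — on its own, the relation $\gamma^{[k]-[s]} = 1$ would only place $\gamma$ in $\bF_{q^{\gcd(n,k-s)}}$, which is not what we want. The appeal to the nonvanishing of $\alpha_0$ via Lemma~\ref{lemma:a0Not0} is precisely what converts $\gcd(n, k - s)$ into $\gcd(n, s) = t$. Everything else is a direct bookkeeping of coefficients via Lemmas~\ref{lemma:subspaceEqual} and~\ref{lemma:cyclicShiftPoly}.
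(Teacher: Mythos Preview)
Your proof is correct and follows essentially the same approach as the paper: assume $\alpha V = \beta V$, equate the coefficients of $P_{\alpha V}$ and $P_{\beta V}$ via Lemmas~\ref{lemma:subspaceEqual} and~\ref{lemma:cyclicShiftPoly}, use the nonvanishing of $\alpha_0$ and $\alpha_s$ to obtain $(\alpha/\beta)^{q^k-1}=1$ and $(\alpha/\beta)^{q^k-q^s}=1$, divide to get $(\alpha/\beta)^{q^s-1}=1$, and finish with the identity $\gcd(q^s-1,q^n-1)=q^{\gcd(s,n)}-1$. The paper frames it as a proof by contradiction rather than contrapositive, but the argument is otherwise identical.
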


\begin{proof}
Assume for contradiction that $\alpha V = \beta V$ for some $\alpha,\beta \in \bF_{q^n}^*$, where $\alpha \nsim _t \beta$. By Lemma~\ref{lemma:cyclicShiftPoly}
\begin{eqnarray*}
P_{\alpha V}(x) &=& x^{[k]}+\sum_{j=0}^i \alpha_j\cdot \alpha^{[k]-[j]} x^{[j]}\\
P_{\beta  V}(x) &=& x^{[k]}+\sum_{j=0}^i \alpha_j\cdot  \beta^{[k]-[j]} x^{[j]}.
\end{eqnarray*}
The equality $\alpha V = \beta V$, together with Lemma~\ref{lemma:subspaceEqual},  imply that
\begin{equation*}
\begin{cases}
        \begin{array}{lcl}
        \alpha_s \alpha^{[k]-[s]}&=&\alpha_s \beta^{[k]-[s]}\\
        \alpha _0 \alpha^{[k]-1}&=&\alpha_0 \beta^{[k]-1}
        \end{array}
\end{cases},
\end{equation*}
and since $\alpha_0\ne 0$ by Lemma~\ref{lemma:a0Not0}, it follows that 
\begin{eqnarray*}
\begin{cases}
        \begin{array}{lcl}
        \left( \frac{\alpha}{\beta}\right)^{[k]-[s]}&=&1\\
        \left( \frac{\alpha}{\beta}\right)^{[k]-1}&=&1
        \end{array}
\end{cases}.
\end{eqnarray*}
By dividing the second equation by the first equation, we get $\left(\frac{\alpha}{\beta}\right) ^{[s]-1}=1$. Hence, $\ord(\frac{\alpha}{\beta})\vert \gcd(q^n-1,q^s-1)$. It is well known that in $\bZ_{q^n-1}$, $\gcd(q^n-1,q^s-1)=q^{\gcd(n,s)}-1$ (e.g., \cite[p. 147, s. 38]{KGP94-Foundations}). Therefore, $\ord(\frac{\alpha}{\beta})\vert q^{\gcd (n,s)}-1$, which implies that $\frac{\alpha}{\beta}\in \bF_{q^t}$ since  $t=\gcd(n,s)$, and hence $\alpha \sim_t \beta$, a contradiction.
\end{proof}

\begin{corollary} \label{corollary:distinctCyclic}
Let $V\in\grsmn{q}{n}{k}$ and $P_V(x)=x^{[k]}+\sum_{j=0}^i \alpha_j x^{[j]}$. If $\alpha_s \ne 0$ for some $s\in \left\{1,\ldots,i\right\}$ with $\gcd(s,n)=t$ then~$V$ has at least $\frac{q^n-1}{q^t-1}$ distinct cyclic shifts.
\end{corollary}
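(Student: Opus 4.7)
The corollary is an immediate consequence of Lemma~\ref{lemma:distictCyclic} together with the counting of equivalence classes of $\sim_t$ discussed just before that lemma. My plan is the following short argument.

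First, I would observe the contrapositive of Lemma~\ref{lemma:distictCyclic}: under the hypothesis $\alpha_s\ne 0$ with $\gcd(s,n)=t$, whenever two nonzero scalars $\alpha,\beta\in\bF_{q^n}^*$ satisfy $\alpha V=\beta V$, they must lie in the same equivalence class of $\sim_t$. Equivalently, picking one representative from each equivalence class of $\sim_t$ yields scalars whose corresponding cyclic shifts of $V$ are pairwise distinct.

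Next, I would invoke the counting stated in the paragraph following Definition~\ref{definition:equivalenceRelation}: the equivalence classes of $\sim_t$ on $\bF_{q^n}^*$ are precisely the cosets of $\bF_{q^t}^*$ in $\bF_{q^n}^*$, and there are exactly $\frac{q^n-1}{q^t-1}$ of them. Combining these two facts gives at least $\frac{q^n-1}{q^t-1}$ pairwise distinct cyclic shifts of $V$, which is the claim.

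There is no real obstacle here — the corollary is a direct restatement of Lemma~\ref{lemma:distictCyclic} in quantitative form, using the index computation $[\bF_{q^n}^*:\bF_{q^t}^*]=\frac{q^n-1}{q^t-1}$ that is valid because $t\mid n$ (which follows from $t=\gcd(s,n)$). The only thing worth double-checking is that the hypothesis of Lemma~\ref{lemma:distictCyclic} applies verbatim, which it does since we are assuming precisely the same conditions on the nonzero coefficient $\alpha_s$.
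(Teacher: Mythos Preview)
Your argument is correct and matches the paper's intent: the corollary is stated without proof in the paper precisely because it follows immediately from Lemma~\ref{lemma:distictCyclic} together with the count of $\sim_t$-equivalence classes given after Definition~\ref{definition:equivalenceRelation}. There is nothing to add.
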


To construct codes with more than one orbit using the Frobenius automorphism, one would like to find a sufficient condition
that a certain Frobenius shift is not a cyclic shift. Such
a condition can be derived for the special case,
where the subspace polynomial is a certain trinomial. The proof of the following lemma is deferred to \nameref{appendix:A}.

\begin{lemma} \label{lemma:CyclicIsNotForb}
If $V\in\grsmn{q}{n}{k}$ and $P_V(x) = x^{[k]}+\alpha_1 x^{[1]}+\alpha_0 x$,
where $\alpha_1 \ne 0$, then there exists
$\alpha \in \bF_{q^n}^*$, $i\in \{0,\ldots,n-1\}$
such that $F^i(V) = \alpha V$ if and only if
\[
\left(\frac{\alpha_0^{\frac{q^k-q}{q-1}}}{\alpha_1 ^{\frac{q^k-1}{q-1}}}\right)^{q^i-1}=1~.
\]
\end{lemma}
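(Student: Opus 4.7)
The plan is to reduce the statement to a multiplicative system of equations in $\bF_{q^n}^*$ and then analyze its solvability. Combining Lemma~\ref{lemma:cyclicShiftPoly} and Lemma~\ref{lemma:FrobeniusShiftPoly} gives
\[
P_{\alpha V}(x) = x^{[k]} + \alpha^{q^k - q}\alpha_1 x^{[1]} + \alpha^{q^k-1}\alpha_0 x, \qquad P_{F^i(V)}(x) = x^{[k]} + \alpha_1^{q^i} x^{[1]} + \alpha_0^{q^i} x.
\]
By Lemma~\ref{lemma:subspaceEqual}, $F^i(V) = \alpha V$ iff these polynomials are identical, and since $\alpha_0 \neq 0$ by Lemma~\ref{lemma:a0Not0} and $\alpha_1 \neq 0$ by hypothesis, this is equivalent to the simultaneous system
\[
\alpha^{q^k-q} = \alpha_1^{q^i-1}, \qquad \alpha^{q^k-1} = \alpha_0^{q^i-1}
\]
admitting a solution $\alpha \in \bF_{q^n}^*$.

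For necessity, raise the first equation to the power $(q^k-1)/(q-1)$ and the second to the power $(q^k-q)/(q-1)$; the resulting left-hand sides both equal $\alpha^{(q^k-q)(q^k-1)/(q-1)}$, so dividing the two equations eliminates $\alpha$ and yields exactly $\bigl(\alpha_0^{(q^k-q)/(q-1)}/\alpha_1^{(q^k-1)/(q-1)}\bigr)^{q^i-1}=1$.

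For sufficiency I would exhibit an explicit $\alpha$ from the hypothesis. Substituting $\mu = \alpha^{q-1}$ (which is valid since $(q-1)$ divides both $q^k-q$ and $q^k-1$) transforms the system into $\mu^{(q^k-q)/(q-1)} = \alpha_1^{q^i-1}$ and $\mu^{(q^k-1)/(q-1)} = \alpha_0^{q^i-1}$, and because the two exponents of $\mu$ differ by exactly $1$, the ratio of these two equations forces $\mu = (\alpha_0/\alpha_1)^{q^i-1}$. Since $(q-1) \mid (q^i-1)$, this $\mu$ admits the explicit $(q-1)$-th root
\[
\alpha \,=\, (\alpha_0/\alpha_1)^{(q^i-1)/(q-1)} \in \bF_{q^n}^*.
\]
Substituting this $\alpha$ back into each of the original two equations then reduces the verification to the given hypothesis together with the arithmetic identity $(q^k-1)/(q-1) - (q^k-q)/(q-1) = 1$. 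The main obstacle is recognizing this candidate $\alpha$; once it is written down, the remaining verification is routine exponent bookkeeping.
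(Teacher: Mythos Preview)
Your proof is correct and follows essentially the same approach as the paper: both reduce the equality $F^i(V)=\alpha V$ via Lemmas~\ref{lemma:subspaceEqual}, \ref{lemma:cyclicShiftPoly}, \ref{lemma:FrobeniusShiftPoly} to the same two-equation multiplicative system, and for sufficiency both exhibit the identical witness $\alpha=(\alpha_0/\alpha_1)^{(q^i-1)/(q-1)}$. The only cosmetic difference is in the necessity step: the paper first divides the two equations to obtain $\alpha^{q-1}=(\alpha_0/\alpha_1)^{q^i-1}$ and then substitutes, whereas you raise each equation to a suitable power so that the $\alpha$-exponents match and then divide; both are routine eliminations of $\alpha$ yielding the same conclusion.
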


\section{Cyclic Subspace Codes} \label{section:constructions}
In this section some constructions of cyclic subspace codes
are provided. We distinguish between two cases. In Subsection~\ref{section:full_orbit}
we discuss codes whose codewords have a full length orbit.
In Subsection~\ref{section:deg_orbit} codes whose codewords
do not have a full length orbit are discussed.

\begin{definition}\label{definition:orbit}
Given a subspace $V\in\grsmn{q}{n}{k}$, the set $\{\alpha V\vert \alpha\in\bF_{q^n}^*\}$ is called \emph{the orbit of $V$}. The subspace $V$ has a \emph{full length orbit} if
$| \{\alpha V \vert \alpha \in \bF_{q^n}^* \}| = \frac{q^n-1}{q-1}$.
If $V$ does not have a full length orbit then it
has a \emph{degenerate orbit}.
\end{definition}

Note, that a cyclic code with a full length orbit cannot have a minimum distance $2k$. This is a simple observation from the fact that each element $\alpha\in \bF_{q^n}^*$ appears in exactly $\frac{q^k-1}{q-1}$ codewords.

We will give several simple related results on subspaces and the size of their orbits.
The first claim may be extracted from \cite[Corollary 3.13]{GMT14-StabSub}.  For completeness we include a shorter self-contained proof.

\begin{lemma}
\label{lemma:sizeOfCycles}
If $V\in\grsmn{q}{n}{k}$ then $| \{\alpha V~\vert~\alpha\in\bF_{q^n}^*\}|
=\frac{q^n-1}{q^t-1}$ for some $t$ which divides $n$.
\end{lemma}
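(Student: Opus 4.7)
The plan is to analyze the stabilizer of $V$ under the action of $\bF_{q^n}^*$ and show that it corresponds to the multiplicative group of a subfield of $\bF_{q^n}$; the orbit size then follows from orbit-stabilizer.

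More concretely, let $H \triangleq \{\alpha \in \bF_{q^n}^* : \alpha V = V\}$. This is clearly a subgroup of $\bF_{q^n}^*$: it contains $1$, is closed under multiplication (if $\alpha V = V$ and $\beta V = V$ then $\alpha\beta V = \alpha(\beta V) = V$), and closed under inverses (if $\alpha V = V$ then $\alpha^{-1} V = \alpha^{-1}(\alpha V) = V$). The key step is to upgrade this to an additive statement by setting $K \triangleq H \cup \{0\}$ and showing that $K$ is a subfield of $\bF_{q^n}$. The only nontrivial check is additive closure: if $\alpha, \beta \in H$ with $\alpha + \beta \neq 0$, then for every $v \in V$ we have $(\alpha+\beta)v = \alpha v + \beta v \in V$, since $\alpha v \in \alpha V = V$ and $\beta v \in \beta V = V$. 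Hence $(\alpha+\beta) V \subseteq V$, and because multiplication by a nonzero scalar preserves dimension, equality holds and $\alpha + \beta \in H$. Thus $K$ is a subfield of $\bF_{q^n}$.

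Every subfield of $\bF_{q^n}$ has the form $\bF_{q^t}$ for some positive integer $t$ with $t \mid n$. So $|H| = |K| - 1 = q^t - 1$. Applying the orbit-stabilizer theorem to the action of $\bF_{q^n}^*$ on $\grsmn{q}{n}{k}$ by scalar multiplication yields
\[
|\{\alpha V \mid \alpha \in \bF_{q^n}^*\}| = \frac{|\bF_{q^n}^*|}{|H|} = \frac{q^n - 1}{q^t - 1},
\]
which is the claimed identity.

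The only step that requires genuine care is verifying that $K$ is closed under addition, since the other subgroup and subfield axioms follow essentially from the definition of the action. Once that is established, the divisibility $t \mid n$ and the orbit size come for free from standard facts about subfields of finite fields and the orbit-stabilizer theorem.
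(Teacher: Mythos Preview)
Your proof is correct and follows essentially the same approach as the paper: both arguments show that the stabilizer of $V$ under scalar multiplication, together with $0$, is closed under addition and hence forms a subfield $\bF_{q^t}$ of $\bF_{q^n}$, from which the orbit size follows. The paper phrases this via a primitive element $\gamma$ and the minimal $\ell$ with $\gamma^\ell V=V$, whereas you invoke orbit--stabilizer directly, but the substance is identical; one small point worth making explicit is that $\bF_q^*\subseteq H$ (since $V$ is an $\bF_q$-subspace), which guarantees the subfield $K$ contains $\bF_q$ and is therefore of the form $\bF_{q^t}$ with $t\mid n$.
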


\begin{proof} 
Let $\gamma$ be a primitive element in $\bF_{q^n}$ and let $\ell\in\bN$ be the smallest integer such that $\gamma^\ell V =V$. Clearly, $\ell\vert q^n-1$ and it is readily extracted that each $i\in\bN$ and each $0\le s <\ell$ satisfy $\gamma^s V = \gamma^{i\ell+s}V$. Furthermore, for every $s_1,s_2\in\{0,\ldots,\ell-1\}$ the sets $A_{s_j}\triangleq \{\gamma^{i\ell+s_j}~\vert~i\in\bN\}$ satisfy $|A_{s_1}|=|A_{s_2}|$. Let $\gamma^{i_1\cdot \ell},\gamma^{i_2\cdot \ell}\in A_0$ for some $i_1,i_2\in \bN$. Since $A_0 = \{\gamma^{i\ell}~\vert~i\in \bN\}$ it follows that
\[
\left(\gamma^{i_1\cdot \ell}+\gamma^{i_2\cdot \ell}\right)V \subseteq \gamma^{i_1\cdot \ell}V+\gamma^{i_2\cdot \ell}V=V+V=V,
\]
and hence $\gamma^{i_1\cdot \ell}+\gamma^{i_2\cdot \ell}\in A_0$, that is, $A_0$ is closed under addition. Since $A_0$ is also closed under multiplication, it follows that~$A_0$ is the multiplicative group of some subfield $\bF_{q^t}$ of $\bF_{q^n}$. Therefore, $|\{\alpha V~\vert~\alpha\in\bF_{q^n}^*\}|=\ell=\frac{q^n-1}{q^t-1}$.
\end{proof}

An immediate consequence of Lemma~\ref{lemma:sizeOfCycles} is that the largest possible size of an orbit is $\frac{q^n-1}{q-1}$, which justifies Definition~\ref{definition:orbit}. As will be shown in the sequel (see Section~\ref{section:deg_orbit}), the parameter $t$ from Lemma~\ref{lemma:sizeOfCycles} must also divide~$k$. A formula for the number of orbits of each possible size is given in~\cite{Drudge}. Most of the $k$-dimensional subspaces of $\bF_{q^n}$ have full length orbits. The
main goal in constructing cyclic codes is to obtain as many orbits as possible
in the code. This task will be left for future work. In this work we consider
first the existence of cyclic codes with one full length orbit and cyclic
codes with multiple full length orbits. Later, we consider the largest cyclic codes
for which all the orbits are degenerate.

%Omitted
%The next claim aims to enumerate the number of subspaces with a given orbit size. Its proof is deferred to \nameref{appendix:B}, and it is heavily based on the discussion in Section~\ref{section:deg_orbit}. Let $\qbin{n}{k}{q}$ be the well-known Gaussian coefficient
%and let $M_q(n,k)$ be the number of full length orbits in $\grsmn{q}{n}{k}$.
%
%\begin{theorem}\label{theorem:GaussSumOfSizes}
%\[
%\qbin{n}{k}{q} = \sum_{d\vert \gcd(n,k)} \frac{q^n-1}{q^d-1}M_{q^d}(n/d,k/d).
%\]
%\end{theorem}

\subsection{Codes with Full Length Orbits} \label{section:full_orbit}
\begin{lemma}\label{lemma:QisProduct} \cite[p. 107, Theorem~10]{MS-77} The polynomial $Q(x)\triangleq x^{[n]}-x$ is the product of all monic irreducible polynomials over $\bF_q$ with degree dividing $n$. 
\end{lemma}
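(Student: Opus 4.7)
The plan is to show the two polynomials agree by matching their (multi)sets of roots, exploiting the fact that $Q(x) = x^{[n]} - x$ is separable. Write $R(x)$ for the product of all monic irreducible polynomials over $\bF_q$ of degree dividing $n$. Since $Q'(x) = -1$, the polynomial $Q$ has no repeated roots, and by construction $R$ is a product of pairwise coprime (distinct) irreducibles, hence also squarefree. So it suffices to show that $Q$ and $R$ have the same roots in an algebraic closure of $\bF_q$.

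First I would prove $R \mid Q$. Let $f$ be any monic irreducible over $\bF_q$ with $\deg f = d \mid n$, and let $\alpha$ be a root of $f$ in an algebraic closure. Then $\bF_q(\alpha) \cong \bF_{q^d}$, and since $d \mid n$ we have $\bF_{q^d} \subseteq \bF_{q^n}$, so $\alpha \in \bF_{q^n}$ and therefore $\alpha^{[n]} = \alpha$, meaning $\alpha$ is a root of $Q$. Because $f$ is the minimal polynomial of $\alpha$ over $\bF_q$, we obtain $f \mid Q$. Since distinct monic irreducibles are coprime, the product $R$ divides $Q$.

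Next I would prove $Q \mid R$. Every root $\alpha$ of $Q$ lies in $\bF_{q^n}$, and its minimal polynomial $f_\alpha$ over $\bF_q$ is monic irreducible of some degree $d$. The containment $\bF_q(\alpha) = \bF_{q^{d}} \subseteq \bF_{q^n}$ forces $d \mid n$, so $f_\alpha$ appears as a factor in $R$, and in particular $\alpha$ is a root of $R$. Thus every root of $Q$ is a root of $R$; combined with the squarefreeness of $Q$ this gives $Q \mid R$. Both polynomials being monic, $Q = R$.

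There is no serious obstacle here; the only subtlety is being careful that $Q$ is separable so that ``divides'' can be checked on the level of root sets, and that $R$ is a well-defined polynomial (which uses the standard fact that there are only finitely many monic irreducibles of each given degree over $\bF_q$).
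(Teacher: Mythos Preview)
Your argument is correct and is the standard textbook proof of this classical fact. Note, however, that the paper does not supply its own proof of this lemma: it is simply quoted from \cite[p.~107, Theorem~10]{MS-77}, so there is nothing in the paper to compare against beyond observing that your proof is exactly the one a reader would find upon following that citation.
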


\begin{theorem}
\label{theorem:irreducible}
If $q^k-1$ divides $n$ and $x^{[k]-1}+x^{[1]-1}+1$
is irreducible over $\bF_q$ then the polynomial $x^{[k]}+x^{[1]}+x$ is a subspace polynomial with respect to $\bF_{q^n}$.
\end{theorem}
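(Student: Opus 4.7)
The plan is to verify the two conditions of Definition~\ref{definition:subspacePoly} directly via the factorization structure, using Lemma~\ref{lemma:QisProduct} as the main tool. First I would observe that $P(x) \triangleq x^{q^k}+x^q+x$ is already a monic linearized polynomial (its nonzero terms are $x^{[k]}$, $x^{[1]}$, and $x^{[0]}=x$), so the only thing that needs checking is that $P(x)$ divides $x^{[n]}-x$.

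Next, I would factor out $x$ to write
\[
P(x) = x \cdot g(x), \qquad g(x) \triangleq x^{q^k-1} + x^{q-1} + 1.
\]
Here $x$ is a monic irreducible polynomial over $\bF_q$ of degree $1$ (and $1\mid n$), while $g(x)$ is a monic polynomial of degree $q^k-1$, which is irreducible over $\bF_q$ by hypothesis. Moreover, $g(0)=1\ne 0$, so $x \nmid g(x)$, and hence the two irreducible factors $x$ and $g(x)$ are distinct and coprime.

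Now I would invoke Lemma~\ref{lemma:QisProduct}: the polynomial $Q(x)=x^{[n]}-x$ equals the product of all monic irreducible polynomials over $\bF_q$ whose degree divides $n$. Since $\deg x = 1 \mid n$ and, by the hypothesis $q^k-1 \mid n$, we also have $\deg g(x) = q^k-1 \mid n$, both $x$ and $g(x)$ appear as factors of $Q(x)$. Because they are coprime, their product $P(x) = x\cdot g(x)$ also divides $Q(x) = x^{[n]}-x$. Combined with the fact that $P$ is a monic linearized polynomial, this is exactly Condition~1 of Definition~\ref{definition:subspacePoly}, so $P$ is a subspace polynomial with respect to $\bF_{q^n}$.

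There is essentially no serious obstacle in this proof: the content of the theorem is the factorization $P(x)=x\cdot g(x)$ together with the bookkeeping on degrees. The only point meriting a moment of care is verifying that the two factors $x$ and $g(x)$ are coprime (so that divisibility of each by $Q$ implies divisibility of the product by $Q$), which is immediate from $g(0)=1$.
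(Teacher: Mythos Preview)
Your proof is correct and follows essentially the same route as the paper: factor $P(x)=x\cdot g(x)$, invoke Lemma~\ref{lemma:QisProduct} to see each factor divides $x^{[n]}-x$, and conclude. The paper's proof is terser and leaves the coprimality of $x$ and $g(x)$ implicit in the word ``hence''; your version spells out this step, which is a harmless (and arguably helpful) addition.
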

\begin{proof}
Assume that $x^{[k]-1}+x^{[1]-1}+1$ is irreducible over $\bF_q$ and its degree divides $n$. By Lemma~\ref{lemma:QisProduct} $x^{[k]-1}+x^{[1]-1}+1\vert Q(x)$, and hence $x^{[k]}+x^{[1]}+x\vert Q(x)$. Therefore, $x^{q^k}+x^q+x$ is a subspace polynomial (see Definition \ref{definition:subspacePoly}), i.e., $P_V(x)=x^{[k]}+x^{[1]}+x$ for some subspace $V$. 
\end{proof}
\begin{corollary}
If $q^k-1$ divides $n$, $x^{[k]-1}+x^{[1]-1}+1$
is irreducible over $\bF_q$, and $V\in\grsmn{q}{n}{k}$ is the subspace whose subspace polynomial is $x^{[k]}+x^{[1]}+x$, then $\bC \triangleq \left\{ \alpha V~\vert~\alpha \in\bF_{q^n}^*\right\}$ is a cyclic subspace code of size $\frac{q^n-1}{q-1}$ and minimum distance at least $2k-2$.
\end{corollary}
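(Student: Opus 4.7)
The plan is to verify three properties of $\bC$ in turn: (i) that $\bC$ is cyclic, (ii) that $|\bC| = \frac{q^n-1}{q-1}$, and (iii) that the minimum distance of $\bC$ is at least $2k-2$. The existence of $V$ with the prescribed subspace polynomial is guaranteed by Theorem~\ref{theorem:irreducible}, so I may freely reference the coefficients of $P_V(x) = x^{[k]} + x^{[1]} + x$ in what follows. Cyclicity is immediate from the definition: for any $\alpha V \in \bC$ and any $\beta \in \bF_{q^n}^*$, we have $\beta(\alpha V) = (\beta \alpha) V \in \bC$.

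For the size, I would invoke Corollary~\ref{corollary:distinctCyclic} with $s = 1$: the coefficient of $x^{[1]}$ in $P_V$ is $1 \ne 0$, and $\gcd(1,n) = 1$, so $V$ admits at least $\frac{q^n-1}{q-1}$ distinct cyclic shifts. Conversely, Lemma~\ref{lemma:sizeOfCycles} bounds the size of any orbit above by $\frac{q^n-1}{q-1}$ (the case $t=1$). Hence $|\bC| = \frac{q^n-1}{q-1}$ exactly.

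For the minimum distance, the strategy is to apply Corollary~\ref{corollary:distanceLowerBound}, which reduces the task to showing that $\gap(\alpha V) = k - 1$ for every $\alpha \in \bF_{q^n}^*$. By Lemma~\ref{lemma:cyclicShiftPoly},
\[
P_{\alpha V}(x) = x^{[k]} + \alpha^{q^k - q}\, x^{[1]} + \alpha^{q^k - 1}\, x,
\]
whose coefficient of $x^{[1]}$ is non-zero since $\alpha \ne 0$. Thus the topmost non-zero coefficient below degree $[k]$ sits at index $1$, giving $\gap(\alpha V) = k - 1$. Applying Corollary~\ref{corollary:distanceLowerBound} to any two distinct codewords $\alpha V, \beta V \in \bC$ then yields $d(\alpha V, \beta V) \ge 2(k-1) = 2k - 2$.

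There is no genuine obstacle in the corollary itself: the heavy lifting has been done in Theorem~\ref{theorem:irreducible} (existence of the subspace $V$ whose subspace polynomial is the trinomial $x^{[k]} + x^{[1]} + x$) and in the structural lemmas of Section~\ref{section:Perliminaries} (Lemmas~\ref{lemma:cyclicShiftPoly} and~\ref{lemma:distictCyclic}, together with Corollaries~\ref{corollary:distinctCyclic} and~\ref{corollary:distanceLowerBound}). The corollary merely assembles these ingredients.
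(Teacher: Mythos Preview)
Your proof is correct and follows essentially the same approach as the paper: the paper invokes Corollary~\ref{corollary:distinctCyclic} for the size and Lemma~\ref{lemma:cyclicShiftPoly} together with Corollary~\ref{corollary:distanceLowerBound} for the minimum distance, exactly as you do. Your write-up is simply more explicit, adding the trivial verification of cyclicity and the upper bound on the orbit size via Lemma~\ref{lemma:sizeOfCycles}, which the paper leaves implicit.
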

\begin{proof}
According to Corollary \ref{corollary:distinctCyclic}, since the coefficient of $x^q$ in $P_V$ is nonzero, there are $\frac{q^n-1}{q-1}$ distinct cyclic shifts in $\bC$. By Lemma~\ref{lemma:cyclicShiftPoly} and Corollary \ref{corollary:distanceLowerBound}, the minimum distance of $\bC$ is at least $2k-2$.
\end{proof}

Although there exists an extensive research on irreducible trinomials over finite fields (e.g., \cite{vZG03-Trinomials}), no explicit construction of irreducible trinomials of the above form is known. However, the following examples were easily found using a computer search.

\begin{example}
Since the polynomials $x^{2^k-1}+x+1$ are irreducible over $\bF_2$ for all $k\in\{2,3,4,6,7,15\}$, it follows that the polynomial $x^{2^k}+x^2+x$ is a subspace polynomial of a subspace $V\in\grsmn{2}{(2^k-1)t}{k}$ for all $t\in\bN$. Therefore, the code $\bC\triangleq\{\alpha V~\vert~\alpha\in\bF_{q^{(2^k-1)t}}^*\}$ is cyclic code of size $2^{t\cdot (2^k-1)}-1$ and minimum distance $2k-2$ in~$\grsmn{2}{(2^k-1)t}{k}$.
\end{example}

By using a similar approach we have that for any $k$ and $q$,
cyclic codes in $\grsmn{q}{n}{k}$ can be explicitly constructed for
infinitely many values of $n$. The construction will make use of the following lemma.

\begin{lemma}\label{lemma:splittingField}
If $f(x)=\prod_{i=1}^t p_i^{\alpha_i}(x)$ is a polynomial over $\bF_q$ and $p_1(x),\ldots,p_t(x)$ are its irreducible factors in $\bF_q$ then $f(x)$ splits completely in $\bF_{q^n}$ for $n = \lcm\{\deg p_i(x)\}_{i=1}^t$.\footnote{$\lcm\{s_i\}_{i=1}^{t}$ stands for the least common multiplier of the integers $s_1,\ldots,s_t$.}
\end{lemma}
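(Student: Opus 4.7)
The plan is to reduce the statement about $f$ to a statement about each of its irreducible factors $p_i$, and then to invoke the standard classification of finite field extensions. First I would observe that $f(x)$ splits completely in $\bF_{q^n}$ if and only if every irreducible factor $p_i(x)$ splits completely in $\bF_{q^n}$, since the roots of $f$ (with multiplicity) are exactly the union of the roots of the $p_i^{\alpha_i}$, and the multiplicity does not affect where the roots live.

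Next, for each irreducible $p_i(x) \in \bF_q[x]$ of degree $d_i$, I would recall the well-known fact that $p_i$ splits completely in $\bF_{q^m}$ if and only if $d_i \mid m$. One direction uses that any root $\beta$ of $p_i$ satisfies $\bF_q(\beta) = \bF_{q^{d_i}}$, and $\bF_{q^{d_i}} \subseteq \bF_{q^m}$ iff $d_i \mid m$; the other direction uses that $\bF_{q^{d_i}}$ is the splitting field of $p_i$ over $\bF_q$. Alternatively, this is an immediate consequence of Lemma~\ref{lemma:QisProduct}: $p_i(x)$ divides $x^{q^m}-x$ iff $d_i \mid m$, and $p_i$ splits in $\bF_{q^m}$ iff it divides $x^{q^m}-x$.

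Combining these two observations, $f$ splits completely in $\bF_{q^n}$ precisely when $d_i \mid n$ for every $i \in \{1,\ldots,t\}$, i.e., when $n$ is a common multiple of $d_1,\ldots,d_t$. Taking $n = \lcm\{\deg p_i(x)\}_{i=1}^t$ satisfies this condition, completing the proof.

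There is no real obstacle here; the argument is essentially a direct appeal to the structure theorem for finite fields (or equivalently to Lemma~\ref{lemma:QisProduct}). The only minor subtlety is remembering that multiplicities $\alpha_i$ are irrelevant to the splitting behaviour, which is why the statement only depends on the degrees of the distinct irreducible factors.
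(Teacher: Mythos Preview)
Your proposal is correct and follows essentially the same approach as the paper: reduce to the irreducible factors, use that the splitting field of an irreducible polynomial of degree $d$ over $\bF_q$ is $\bF_{q^d}$, and invoke the subfield criterion $\bF_{q^{d}}\subseteq\bF_{q^{m}}\iff d\mid m$. The paper cites this as \cite[Corollary~2.15]{LN97-FiniteFields}, while you additionally note that it follows from Lemma~\ref{lemma:QisProduct} and that the multiplicities $\alpha_i$ are irrelevant, but the core argument is the same.
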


\begin{proof}
According to \cite[Corollary 2.15, p. 52]{LN97-FiniteFields}, the splitting field of an irreducible polynomial of degree $m$ over $\bF_q$ is $\bF_{q^m}$. Therefore, for each $i=1,\ldots,t$, the splitting field of $p_i$ is $\bF_{q^{n_i}}$, where $n_i\triangleq\deg p_i$. For any $i$, the only finite fields that contain $\bF_{q^{n_i}}$ are of the form $\bF_{q^r}$ for $r$ such that $n_i|r$. Hence, the smallest field that contains $\bF_{q^{n_i}}$ for all $i$ is $\bF_{q^n}$.
\end{proof}

\begin{theorem} \label{theorem:explicit2kminus2}
For any $k$ and $q$ we may explicitly construct a cyclic subspace code of size $\frac{q^n-1}{q-1}$ and minimum distance $2k-2$ in $\grsmn{q}{n}{k}$ for infinitely many values of $n$.
\end{theorem}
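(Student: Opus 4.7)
The plan is to drop the irreducibility hypothesis in Theorem~\ref{theorem:irreducible} and instead exploit the full factorization over $\bF_q$ of the trinomial $f(x)\triangleq x^{q^k-1}+x^{q-1}+1$. Write $P(x)\triangleq x^{q^k}+x^q+x = x\cdot f(x)$. Since the coefficient of $x$ in $P$ is $1\ne 0$, Lemma~\ref{lemma:a0Not0} guarantees that every root of $P$ in its splitting field is simple. Combined with $f(0)=1$ (so $x$ is coprime to $f$), this forces the factorization of $f$ over $\bF_q$ into monic irreducibles to be squarefree: $f=\prod_{i=1}^t p_i$ with pairwise distinct $p_i\in\bF_q[x]$.

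Next, set $m\triangleq\lcm\{\deg p_i\}_{i=1}^t$ and fix any $n$ of the form $n=jm$ with $j\in\bN$; there are infinitely many such $n$. By Lemma~\ref{lemma:splittingField}, $P$ splits completely over $\bF_{q^m}\subseteq\bF_{q^n}$, and by the previous paragraph its roots in $\bF_{q^n}$ are simple. Hence Definition~\ref{definition:subspacePoly}(2) certifies $P$ as a subspace polynomial with respect to $\bF_{q^n}$, and so $P=P_V$ for some $V\in\grsmn{q}{n}{k}$. (Equivalently, each $\deg p_i$ divides $m\vert n$, so by Lemma~\ref{lemma:QisProduct} each $p_i$ divides $x^{[n]}-x$; together with $x\vert x^{[n]}-x$ and the coprimality of the factors, this gives $P\vert x^{[n]}-x$ directly.)

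Finally, the cyclic code $\bC\triangleq\{\alpha V~\vert~\alpha\in\bF_{q^n}^*\}$ inherits the properties asserted in the corollary following Theorem~\ref{theorem:irreducible}: the coefficient of $x^{[1]}$ in $P_V$ is $1\ne 0$ and $\gcd(1,n)=1$, so Corollary~\ref{corollary:distinctCyclic} gives $|\bC|=\frac{q^n-1}{q-1}$; and since $\gap(V)=k-1$ is preserved under every cyclic shift by Lemma~\ref{lemma:cyclicShiftPoly}, Corollary~\ref{corollary:distanceLowerBound} yields minimum distance at least $2k-2$. This produces the desired code for every $n$ in the infinite arithmetic progression $\{jm~\vert~j\in\bN\}$, which depends only on $k$ and $q$ via the factorization of $f$ over $\bF_q$, and is therefore explicit.

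The argument contains no real obstacle; the only subtle point is the squarefreeness of the factorization of $f$, which prevents $P$ from acquiring multiple roots and is handled for free by the nonzero constant-term criterion of Lemma~\ref{lemma:a0Not0}. The rest is a clean repackaging of the machinery developed in Section~\ref{section:Perliminaries}.
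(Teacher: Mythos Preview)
Your proof is correct and follows essentially the same route as the paper: factor the trinomial $x^{[k]}+x^{[1]}+x$ over $\bF_q$, take $n$ to be any multiple of the $\lcm$ of the degrees of its irreducible factors (via Lemma~\ref{lemma:splittingField}), and then invoke Corollary~\ref{corollary:distinctCyclic} for the orbit size and Lemma~\ref{lemma:cyclicShiftPoly} with Corollary~\ref{corollary:distanceLowerBound} for the distance. Your write-up is in fact more careful than the paper's, which leaves implicit both the squarefreeness of the factorization (your appeal to Lemma~\ref{lemma:a0Not0}) and the passage from the splitting-field degree $m$ to the infinite family $\{jm\}_{j\in\bN}$.
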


\begin{proof}
By factoring $T(x)\triangleq x^{[k]}+x^{[1]}+x$ and computing the least common multiplier of the degrees of its factors we find the degree of the splitting field of $T(x)$ (see Lemma~\ref{lemma:splittingField}). The subspace $V$, whose corresponding subspace polynomial is $T(x)$ may be easily found by finding the kernel of the linear transformation defined by $T$. If $\bC \triangleq \left\{ \alpha V~\vert~\alpha \in\bF_{q^n}^*\right\}$ then by Corollary \ref{corollary:distinctCyclic} there are $\frac{q^n-1}{q-1}$ distinct cyclic shifts in $\bC$. By Lemma~\ref{lemma:cyclicShiftPoly} and Corollary \ref{corollary:distanceLowerBound}, the minimum distance of $\bC$ is at least $2k-2$. Infinitely many values of $n$ will are by considering the cyclic shifts of $V$ in all the field extensions of the splitting field.
\end{proof}

\begin{remark}
Theorem~\ref{theorem:explicit2kminus2} proves Conjecture \ref{conjecture:2kminus2} for infinitely many values of $n$.
\end{remark}

\begin{remark}
The codes implied by Theorems~\ref{theorem:irreducible} and Theorem~\ref{theorem:explicit2kminus2} cannot be enlarged using the Frobenius isomorphism due to Lemma~\ref{lemma:CyclicIsNotForb}, since for any $i\in\{0,\ldots,n-1\}$ we have that the $i$th Frobenius shift is also a cyclic shift.
\end{remark}

Let $N=t\cdot n$ and let $\gamma$ be a primitive element
in $\bF_{q^N}$. Note, that the set
$\{0\}\cup \{\gamma^{i(q^N-1)/(q^n-1)}\} _{i=0}^{q^n-2}$
is the unique subfield $\bF_{q^n}$ of $\bF_{q^N}$.
Let $V$ be a subspace of $\bF_{q^n}$. Since $\bF_{q^n}\subseteq \bF_{q^N}$
we can view the subspace $V$ as a subspace of $\bF_{q^N}$ over $\bF_q$. 

Now, we present a general method for constructing cyclic codes in $\grsmn{q}{N}{k}$, where $N=t\cdot n$ for some prime $n$, which have more than one full length orbit. We do so by using the Frobenius automorphism.

\begin{lemma} \label{lemma:multipleOrbitUsingPrimality}
Let $n$ be a prime, $n\vert N$, $V\in\grsmn{q}{N}{k}$ and $P_V(x)=x^{[k]} +\alpha_1 x^{[1]}+\alpha_0 x$, where $\alpha_0,\alpha_1 \in \bF_{q^n}^*$. If $\alpha_1 ^{\frac{q^k-1}{q-1}} \nsim_1 \alpha_0^{\frac{q^k-q}{q-1}}$ (see Definition \ref{definition:equivalenceRelation}) then the code $\bC\sus \grsmn{q}{N}{k}$ defined by
\begin{eqnarray}\label{equation:multipleOrbitsDefinition}
\bC \triangleq \bigcup_{i=0}^{n-1} \left\{\alpha \cdot F^i(V)~\vert~ \alpha \in \bF_{q^N}^*\right\}
\end{eqnarray}
is of size $n\cdot\frac{q^N-1}{q-1}$ and minimum distance $2k-2$.
\end{lemma}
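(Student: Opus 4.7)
The plan is to verify three things in sequence: (1) each of the $n$ pieces $\{\alpha F^i(V):\alpha\in\bF_{q^N}^*\}$ is itself a full-length orbit; (2) these $n$ orbits are pairwise disjoint; (3) every two distinct codewords in $\bC$ are at distance at least $2k-2$. Task (2) is by far the main obstacle, as the primality hypothesis on $n$ and the separation hypothesis on $\alpha_0,\alpha_1$ interact only there.

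For (1), Lemma~\ref{lemma:FrobeniusShiftPoly} gives $P_{F^i(V)}(x)=x^{[k]}+F^i(\alpha_1)x^{[1]}+F^i(\alpha_0)x$, and $F^i(\alpha_1)\ne 0$ since $\alpha_1\ne 0$. Applying Corollary~\ref{corollary:distinctCyclic} with $s=1$, so $t=\gcd(1,N)=1$, gives at least $\frac{q^N-1}{q-1}$ distinct cyclic shifts of $F^i(V)$. Combined with Lemma~\ref{lemma:sizeOfCycles}, which forces the orbit size to be $\frac{q^N-1}{q^{t'}-1}$ for some $t'\mid N$, this pins the orbit size at exactly $\frac{q^N-1}{q-1}$.

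For (2), suppose that the orbits of $F^i(V)$ and $F^j(V)$ coincide for some $0\le j<i\le n-1$; then $F^i(V)=\beta F^j(V)$ for some $\beta\in\bF_{q^N}^*$, and applying the automorphism $F^{N-j}$ yields $F^\ell(V)=\beta'V$ with $\ell=i-j\in\{1,\dots,n-1\}$ and $\beta'=F^{N-j}(\beta)\in\bF_{q^N}^*$. Invoking Lemma~\ref{lemma:CyclicIsNotForb} (with $N$ playing the role of the ambient degree) gives $\gamma^{q^\ell-1}=1$, where $\gamma\triangleq \alpha_0^{(q^k-q)/(q-1)}/\alpha_1^{(q^k-1)/(q-1)}$, and hence $\gamma\in\bF_{q^{\gcd(\ell,N)}}$. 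But $\alpha_0,\alpha_1\in\bF_{q^n}^*$ places $\gamma\in\bF_{q^n}$ as well, so $\gamma\in\bF_{q^{\gcd(\ell,N)}}\cap\bF_{q^n}=\bF_{q^{\gcd(\ell,n)}}$, using $n\mid N$. The hypothesis $\alpha_1^{(q^k-1)/(q-1)}\nsim_1\alpha_0^{(q^k-q)/(q-1)}$ is exactly $\gamma\notin\bF_q$, and since $n$ is prime the only alternatives for $\gcd(\ell,n)$ are $1$ and $n$; the first is ruled out, so $n\mid\ell$, contradicting $1\le\ell\le n-1$. Thus the $n$ orbits are disjoint, giving $|\bC|=n\cdot\frac{q^N-1}{q-1}$.

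For (3), Lemma~\ref{lemma:cyclicShiftPoly} applied to the subspace polynomial of $F^i(V)$ computed above shows that every codeword $\alpha F^i(V)\in\bC$ has subspace polynomial of the form $x^{[k]}+c_1 x^{[1]}+c_0 x$ with $c_1\ne 0$, and therefore has gap equal to $k-1$. Corollary~\ref{corollary:distanceLowerBound} then immediately yields $d(U,W)\ge 2(k-1)=2k-2$ for any two distinct $U,W\in\bC$. The anticipated obstacle is entirely in paragraph (2); everything else is bookkeeping using the gap machinery already developed.
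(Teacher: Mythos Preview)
Your proof is correct and follows essentially the same three-part structure as the paper's, invoking the same key ingredients (Lemma~\ref{lemma:CyclicIsNotForb} for the disjointness step and the identity $\gcd(q^a-1,q^b-1)=q^{\gcd(a,b)}-1$). Your reduction via applying $F^{N-j}$ to pass from $F^i(V)=\beta F^j(V)$ directly to $F^\ell(V)=\beta' V$ is a mild streamlining of the paper's version, which instead sets $U=F^i(V)$, applies Lemma~\ref{lemma:CyclicIsNotForb} to $U$, and then has to strip off an extra factor of $q^i$ in the order computation.
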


\begin{proof} 
The code $\bC$ is obviously cyclic. By Lemmas~\ref{lemma:intersectionLemma},~\ref{lemma:cyclicShiftPoly}, and \ref{lemma:FrobeniusShiftPoly}, the dimension of the intersection between any two distinct subspaces in $\bC$ is at most 1, and hence the minimum distance of $\bC$ is $2k-2$. 

To show that $|\bC|=n\cdot\frac{q^N-1}{q-1}$, fix $i$ and notice that by Lemma~\ref{lemma:FrobeniusShiftPoly} we have that the coefficient of $x^{[1]}$ in $P_{F^i(V)}(x)$ is non-zero. Therefore, Lemma~\ref{lemma:distictCyclic} implies that the set $\{\alpha \cdot F^i(V)~\vert~ \alpha \in \bF_{q^N}^*\}$ consists of $\frac{q^N-1}{q-1}$ distinct subspaces. 

To complete the proof, we have to show that all the sets in the union in (\ref{equation:multipleOrbitsDefinition}) are disjoint. Let $i, j \in \{0,\ldots,n-1\},i\ne j$, and assume for contradiction that there exists $\beta,\gamma \in \bF_{q^N}^*$ such that $\beta F^i (V) = \gamma F^j(V)$. W.l.o.g assume that $j>i$, and denote $U\triangleq F^i(V)$. Notice that by Lemma~\ref{lemma:FrobeniusShiftPoly} we have
\[
P_U(x) = P_{F^i(V)}(x)= x^{[k]}+F^i(\alpha_1) \cdot x^{[1]}+F^i(\alpha_0)\cdot x= x^{[k]}+\alpha_1 ^{[i]} \cdot x^{[1]}+\alpha_0 ^{[i]}\cdot x.
\]
Since $F^{j-i}(U)=\frac{\beta}{\gamma}\cdot U$, we may apply Lemma~\ref{lemma:CyclicIsNotForb} to get
\begin{eqnarray} \label{equation:CondOnU}
\left( \frac{\left(\alpha_0^{q^{i}}\right)^{\frac{q^k-q}{q-1}}}{\left(\alpha_1^{q^{i}}\right)^{\frac{q^k-1}{q-1}}}\right)^{q^{j-i}-1} = 1.
\end{eqnarray}
Denote $z\triangleq \frac{\alpha_0^{\frac{q^k-q}{q-1}}}{\alpha_1^{\frac{q^k-1}{q-1}}}$ and notice that
\begin{enumerate}
\item[A1.] Equation (\ref{equation:CondOnU}) implies $z^{q^{i}(q^{j-i}-1)}=1$.
\item[A2.] The condition $\alpha_1 ^{\frac{q^k-1}{q-1}}\nsim_1\alpha_0^{\frac{q^k-q}{q-1}}$ implies $z \notin \bF_q$.
\item[A3.] Since $\alpha_0,\alpha_1 \in \bF_{q^n}^*$ it follows that $z \in \bF_{q^n}^*$.
\end{enumerate}
By A1 and A3 we have that $\ord(z)$ divides both $q^{i}(q^{j-i}-1)$ and $q^n-1$, therefore $\ord(z)\vert gcd(q^{i}(q^{j-i}-1),q^n-1)$. Since $q^n-1$ is not a power of $q$, it follows that $gcd(q^n-1,q^{i})=1$, and hence,
\[
gcd(q^{i}(q^{j-i}-1),q^n-1) = gcd(q^{j-i}-1,q^n-1).
\]
It is well known that in any field $gcd(x^r-1,x^s-1)=x^{gcd(r,s)}-1$ (e.g., \cite[p. 147, s. 38]{KGP94-Foundations}). Therefore, the primality of~$n$ implies that $\gcd(q^{j-i}-1,q^n-1)=q^{\gcd(j-i,n)}-1=q-1$, and hence $\ord(z)\vert q-1$. The only elements of $\bF_{q^N}$ whose order divides $q-1$ are the elements of $\bF_q$, and hence $z\in\bF_q$, a contradiction to A2.
\end{proof}

Lemma~\ref{lemma:alpha0alpha1} which follows, whose proof is deferred to \nameref{appendix:A}, shows that coefficients $\alpha_0,\alpha_1$ from Lemma~\ref{lemma:multipleOrbitUsingPrimality} may be easily found in $\bF_{q^n}$.

\begin{lemma}\label{lemma:alpha0alpha1}
Let $n$ be prime and let $\gamma$ be a primitive element in $\bF_{q^n}$. If $\alpha_0\triangleq\gamma$ and $\alpha_1 \triangleq \gamma^q$ then $\alpha_1 ^{\frac{q^k-1}{q-1}} \nsim_1 \alpha_0^{\frac{q^k-q}{q-1}}$.
\end{lemma}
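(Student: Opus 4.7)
The plan is a direct computation followed by an order argument. First I would compute the ratio $\alpha_1^{(q^k-1)/(q-1)} / \alpha_0^{(q^k-q)/(q-1)}$ as a single power of $\gamma$, using $\alpha_0=\gamma$ and $\alpha_1=\gamma^q$. The exponent telescopes:
\[
q\cdot\frac{q^k-1}{q-1}-\frac{q^k-q}{q-1} = q\cdot\frac{q^k-1}{q-1}-q\cdot\frac{q^{k-1}-1}{q-1} = q\cdot\frac{q^k-q^{k-1}}{q-1} = q\cdot q^{k-1}=q^k.
\]
Hence $\alpha_1^{(q^k-1)/(q-1)} / \alpha_0^{(q^k-q)/(q-1)} = \gamma^{q^k}$.

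Next I would argue that $\gamma^{q^k}$ is itself a primitive element of $\bF_{q^n}$. Since $\gamma$ has order $q^n-1$, the order of $\gamma^{q^k}$ equals $(q^n-1)/\gcd(q^n-1,q^k)$, and $\gcd(q^n-1,q^k)=1$ because $q$ and $q^n-1$ are coprime. Therefore $\gamma^{q^k}$ has order $q^n-1$.

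Finally, since $n$ is prime we have $n\geq 2$, so $q^n-1>q-1$. An element lies in $\bF_q$ iff its multiplicative order divides $q-1$, so $\gamma^{q^k}\notin\bF_q$. By Definition~\ref{definition:equivalenceRelation} this is exactly the statement $\alpha_1^{(q^k-1)/(q-1)}\nsim_1 \alpha_0^{(q^k-q)/(q-1)}$.

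There is no real obstacle here; the only thing to be careful about is the telescoping of the exponent and the small point that primality of $n$ is used only to guarantee $n\geq 2$ (so that the primitive element is genuinely outside the prime subfield $\bF_q$).
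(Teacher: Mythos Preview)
Your proof is correct and takes essentially the same approach as the paper's: compute the relevant power of $\gamma$ and use that $\gcd(q,q^n-1)=1$ to conclude it lies outside $\bF_q$. The paper frames this as a contradiction (raising to the $(q-1)$-th power to kill the unknown $\bF_q^*$ factor, then inverting $q$ modulo $q^n-1$ to deduce $\gamma^{q-1}=1$), whereas you compute the ratio directly as $\gamma^{q^k}$ and argue via its multiplicative order; the underlying ingredients are identical, and your observation that only $n\ge 2$ (not primality) is needed here is accurate.
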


As a consequence of Lemma~\ref{lemma:multipleOrbitUsingPrimality} and Lemma~\ref{lemma:alpha0alpha1} we have the following theorem.

\begin{theorem}\label{theorem:multipleOrbits}
Let $n$ be prime, $\gamma$ a primitive element of $\bF_{q^n}$, and define $\alpha_0\triangleq \gamma$ and $\alpha_1 \triangleq \gamma^q$. If $\bF_{q^N}$ is the splitting field of the polynomial $x^{[k]} +\alpha_1 x^{[1]}+\alpha_0 x$ and $V\in\grsmn{q}{N}{k}$ its corresponding subspace, then  
\begin{eqnarray*}
\bC \triangleq \bigcup_{i=0}^{n-1} \left\{\alpha \cdot F^i(V)~\vert~ \alpha \in \bF_{q^N}^*\right\}
\end{eqnarray*}
is a cyclic code of size $n\cdot \frac{q^N-1}{q-1}$ and minimum distance $2k-2$.
\end{theorem}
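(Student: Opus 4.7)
The plan is to observe that Theorem~\ref{theorem:multipleOrbits} is essentially a direct consequence of Lemmas~\ref{lemma:multipleOrbitUsingPrimality} and~\ref{lemma:alpha0alpha1}, once a few straightforward setup steps are verified. So the strategy is to check the hypotheses of Lemma~\ref{lemma:multipleOrbitUsingPrimality} for the specific choice $\alpha_0 = \gamma$, $\alpha_1 = \gamma^q$, and then invoke that lemma.

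First I would establish that $V$ is indeed a well-defined $k$-dimensional subspace of $\bF_{q^N}$. The polynomial $P(x) = x^{[k]} + \alpha_1 x^{[1]} + \alpha_0 x$ is linearized and has a nonzero coefficient of $x$ (namely $\alpha_0 = \gamma \ne 0$), so by Lemma~\ref{lemma:a0Not0} it is a subspace polynomial with respect to its splitting field. Since $\deg P = q^k$, its set $V$ of roots forms a $k$-dimensional subspace of $\bF_{q^N}$, giving $V \in \grsmn{q}{N}{k}$ with $P_V = P$.

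Next I would verify the divisibility $n \mid N$. Because $\alpha_0,\alpha_1 \in \bF_{q^n}$, the polynomial $P$ has all its coefficients in $\bF_{q^n}$, hence its splitting field $\bF_{q^N}$ contains $\bF_{q^n}$; this forces $n \mid N$, which is the hypothesis required by Lemma~\ref{lemma:multipleOrbitUsingPrimality}. With $n$ prime (given), $\alpha_0, \alpha_1 \in \bF_{q^n}^*$, and the non-equivalence $\alpha_1^{(q^k-1)/(q-1)} \nsim_1 \alpha_0^{(q^k-q)/(q-1)}$ supplied directly by Lemma~\ref{lemma:alpha0alpha1}, all conditions of Lemma~\ref{lemma:multipleOrbitUsingPrimality} are met.

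Applying Lemma~\ref{lemma:multipleOrbitUsingPrimality} then yields that the code
\[
\bC = \bigcup_{i=0}^{n-1}\left\{\alpha \cdot F^i(V)~\vert~\alpha\in\bF_{q^N}^*\right\}
\]
is cyclic, has size $n\cdot\frac{q^N-1}{q-1}$, and has minimum distance $2k-2$, which is exactly the desired conclusion. There is no real obstacle here beyond bookkeeping: all the hard work -- analyzing when a Frobenius shift coincides with a cyclic shift (Lemma~\ref{lemma:CyclicIsNotForb}), exploiting the primality of $n$ to force disjoint orbits (Lemma~\ref{lemma:multipleOrbitUsingPrimality}), and verifying the non-equivalence for the explicit choice of coefficients (Lemma~\ref{lemma:alpha0alpha1}) -- has already been done; the theorem merely packages these together as an explicit construction.
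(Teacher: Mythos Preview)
Your proposal is correct and matches the paper's own approach exactly: the paper states Theorem~\ref{theorem:multipleOrbits} as an immediate consequence of Lemma~\ref{lemma:multipleOrbitUsingPrimality} and Lemma~\ref{lemma:alpha0alpha1}, without giving a separate proof. Your additional bookkeeping (that $P$ is a genuine subspace polynomial via Lemma~\ref{lemma:a0Not0}, and that $n\mid N$ because the splitting field must contain the coefficient $\alpha_0=\gamma$ and hence $\bF_{q^n}$) fills in details the paper leaves implicit.
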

Note that the construction in Theorem~\ref{theorem:multipleOrbits} improves the construction of Theorem~\ref{theorem:explicit2kminus2}. In Theorem~\ref{theorem:explicit2kminus2} we construct a code with one full length orbit, where in Theorem~\ref{theorem:multipleOrbits} we add multiple orbits without compromising the minimum distance.
%Using computer search we have found a suitable polynomial for $q=2,k=3,n=11$. From which we have:
%%x^8 + (a^10 + a^5 + a^3 + 1)*x^2 + a*x
%\begin{corollary}
%For every $N=11t$ there exists an explicit construction of cyclic code of size $11\cdot(2^N-1)$ in $\grsmn{2}{11t}{3}$ and minimum distance $4$.
%\end{corollary}

\subsection{Codes with degenerate orbits}\label{section:deg_orbit} 
In this subsection it is shown that subspaces of $\grsmn{q}{n}{k}$ that may be considered as subspaces over a subfield of $\bF_{q^n}$ which is larger than $\bF_q$, form a cyclic code with a unique subspace polynomial structure. The cyclic property and the minimum distance of this code are an immediate consequence of this unique structure.

\begin{lemma}\label{lemma:embedding}
If $n,k\in \bN,k<n$ and $d\in \bN$ divides $\gcd(n,k)$, then there exists an $\bF_{q^d}$-homomorphism from $\grsmn{q^d}{n/d}{k/d}$ to $\grsmn{q}{n}{k}$.
\end{lemma}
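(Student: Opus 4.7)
The plan is to exhibit a concrete embedding via the natural identification of $\bF_{q^n}$ with $\bF_{(q^d)^{n/d}}$. Since $d\mid n$, the field $\bF_{q^d}$ sits as a subfield of $\bF_{q^n}$, and $\bF_{q^n}$ becomes a vector space of dimension $n/d$ over $\bF_{q^d}$. In particular, this provides a canonical $\bF_{q^d}$-vector-space isomorphism $\bF_{q^n}\cong \bF_{(q^d)^{n/d}}$, and I would use this to identify any $\bF_{q^d}$-subspace of $\bF_{(q^d)^{n/d}}$ with a subset of $\bF_{q^n}$.

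Concretely, I would define $\Phi:\grsmn{q^d}{n/d}{k/d}\to\grsmn{q}{n}{k}$ by $\Phi(W)=W$, where on the left $W$ is regarded as a $(k/d)$-dimensional $\bF_{q^d}$-subspace of $\bF_{q^n}$ and on the right $W$ is regarded as an $\bF_q$-subspace. To verify well-definedness I would fix an $\bF_q$-basis $\beta_1,\dots,\beta_d$ of $\bF_{q^d}$ and an $\bF_{q^d}$-basis $w_1,\dots,w_{k/d}$ of $W$, and check that the $k$ products $\{\beta_j w_i\}_{i,j}$ form an $\bF_q$-basis of $W$, so that $\dim_{\bF_q}W=(k/d)\cdot d=k$, as required.

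To conclude, I would verify injectivity (immediate, since the map is set-theoretic) and the intertwining property: for every $\alpha\in\bF_{q^d}^*$, $\Phi(\alpha W)=\alpha W=\alpha\Phi(W)$. This is the sense in which $\Phi$ is an $\bF_{q^d}$-homomorphism, matching the cyclic-code context in which the lemma will subsequently be used (where the Grassmannians carry the natural $\bF_{q^d}^*$-multiplicative action). There is no real obstacle here; the only subtlety is pinning down the meaning of ``$\bF_{q^d}$-homomorphism'' as equivariance with respect to scalar multiplication by $\bF_{q^d}^*$, rather than as a morphism of some richer algebraic structure on the Grassmannians.
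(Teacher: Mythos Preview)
Your proposal is correct and is essentially the same argument as the paper's: both identify $\bF_{q^n}$ with the $\bF_{q^d}$-vector space $\bF_{(q^d)^{n/d}}$ and push an $\bF_{q^d}$-subspace $W$ forward to itself viewed as an $\bF_q$-subspace. The paper routes this through an explicit chain of isomorphisms $\bF_{q^d}^{n/d}\xrightarrow{f}\bF_{(q^d)^{n/d}}\xrightarrow{h}\bF_{q^n}$ and then asserts without detail that the image has $\bF_q$-dimension $k$, whereas you skip the tuple-space detour and supply the explicit tower-of-bases argument for the dimension count; your added remark on $\bF_{q^d}^*$-equivariance corresponds to the paper's observation that the underlying point map $g$ is $\bF_{q^d}$-linear.
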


\begin{proof}
Let $\bF_{q^d}^{n/d}$ be the vector space of dimension $n/d$ over $q^d$. It is widely known that there exists an isomorphism $f$ from $\bF_{q^d}^{n/d}$ to $\bF_{(q^d)^{n/d}}$. Notice that by our abuse of notation, both $\bF_{q^d}^{n/d}$ and $\bF_{(q^d)^{n/d}}$ can be considered as vector spaces over $\bF_{q^d}$. Since there is a unique field with $q^n$ elements, $\bF_{q^n}$ may also be considered as a vector space over $\bF_{q^d}$. Therefore, there exists an isomorphism $g:\bF_{q^d}^{n/d}\to\bF_{q^n}$ such that $g\triangleq h\circ f$, where $h$ is some isomorphism from $\bF_{(q^d)^{n/d}}$ to $\bF_{q^n}$.

Notice that for all $u,v\in\bF_{q^d}^{n/d}$ and $\alpha,\beta\in\bF_{q^d}$, we have $g(\alpha v+\beta u)=\alpha g(v)+\beta g(u)$. For $V\in \grsmn{q^d}{n/d}{k/d}$ let $G(V)\triangleq\{g(v)|v\in V\}$. The set $G(V)$ is clearly a subspace of dimension $k$ over $\bF_q$ in $\bF_{q^n}$. Furthermore, the function $G:\grsmn{q^d}{n/d}{k/d}\to\grsmn{q}{n}{k}$ is injective since $g$ is injective.
\end{proof}

Lemma \ref{lemma:embedding} allows us to define the following set of subspaces.

\begin{construction}\label{construction:Cd}
For $n,k\in \bN$ and $d\in \bN$ such that $d|\gcd(n,k)$, let $\bC_d$ be the code \[\{G(V)\vert V\in\grsmn{q^d}{n/d}{k/d}\},\]where $G$ was defined in the proof of Lemma~\ref{lemma:embedding}.
\end{construction}
Since $\bC_d$ is the image of an injective function from $\grsmn{q^d}{n/d}{k/d}$ to $\grsmn{q}{n}{k}$, we have the following. 
\begin{corollary}\label{corollary:CdSize}
$|\bC_d|=\qbin{n/d}{k/d}{q^d}$.
\end{corollary}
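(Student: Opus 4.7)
The plan is to appeal directly to the construction of $\bC_d$ and the injectivity established in Lemma~\ref{lemma:embedding}. Since $\bC_d$ is defined as the image $\{G(V) \mid V \in \grsmn{q^d}{n/d}{k/d}\}$ of the map $G$, and $G$ was shown to be injective (because the underlying map $g$ is an isomorphism, hence injective, and $G$ inherits injectivity from $g$ at the level of subspaces), the cardinality of $\bC_d$ coincides with the cardinality of the domain $\grsmn{q^d}{n/d}{k/d}$.

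Thus the only remaining step is to invoke the well-known formula for the number of $(k/d)$-dimensional subspaces of an $(n/d)$-dimensional vector space over $\bF_{q^d}$, which is exactly the Gaussian binomial coefficient $\qbin{n/d}{k/d}{q^d}$. Combining these two observations yields the claim $|\bC_d| = \qbin{n/d}{k/d}{q^d}$.

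There is no real obstacle here; the corollary is essentially a restatement of injectivity plus the definition of the Grassmannian. The only subtlety worth briefly noting, if one wanted to be pedantic, is that injectivity of $G$ as a map between Grassmannians (not merely as a map between underlying sets) follows because $g$ being an $\bF_{q^d}$-linear bijection sends distinct $\bF_{q^d}$-subspaces to distinct images, and these images are precisely the $\bF_q$-subspaces $G(V)$ appearing in $\bC_d$.
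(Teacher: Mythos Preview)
Your proof is correct and matches the paper's own argument essentially verbatim: the paper simply notes that $\bC_d$ is the image of an injective function from $\grsmn{q^d}{n/d}{k/d}$ to $\grsmn{q}{n}{k}$, hence has cardinality $\qbin{n/d}{k/d}{q^d}$. Your additional remarks on why $G$ inherits injectivity from $g$ are fine but not needed beyond what Lemma~\ref{lemma:embedding} already established.
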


\begin{remark}\label{remark:sumOfFqd}
The code $\bC_d$ from construction \ref{construction:Cd} may be alternatively defined as
\[\bC_d \triangleq \left\{\sum_{i=1}^{k/d}\alpha_i \bF_{q^d}~\Big\vert~ \alpha_1,\ldots,\alpha_{k/d} \in \bF_{q^n} \text{ are linearly independent over }\bF_{q^d}\right\}.\]
The proof of the equivalence of this alternative definition appears in \nameref{appendix:B}. The code $\bC_d$ may also be defined as the set of all subspaces of $\grsmn{q}{n}{k}$ that are also subspaces over~$\bF_{q^d}$.
\end{remark} 
The subspaces in $\bC_d$ admit a unique subspace polynomial structure, from which the useful properties of $\bC_d$ are apparent. 

\begin{lemma}\label{lemma:CdPolyStructure}
If $V\in\grsmn{q}{n}{k}$ then $V\in \bC_d$ if and only if $P_V(x)=\sum_{i=0}^{k/d}c_i x^{[di]}$ for some $c_i$'s in $\bF_{q^n}$. 
%If $V\in\grsmn{q^d}{n/d}{k/d}$ then $P_{V} =P_{F(V)}$ (see Lemma \ref{lemma:embedding}).
\end{lemma}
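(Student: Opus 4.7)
The plan is to prove the equivalence by combining Lemma~\ref{lemma:cyclicShiftPoly} with the alternative characterization from Remark~\ref{remark:sumOfFqd}, namely that $\bC_d$ consists of exactly those $V\in\grsmn{q}{n}{k}$ which are closed under scalar multiplication by every element of $\bF_{q^d}$.

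For the ``only if'' direction, suppose $V\in\bC_d$, so that $\alpha V = V$ for every $\alpha\in\bF_{q^d}^*$. By Lemma~\ref{lemma:subspaceEqual} and Lemma~\ref{lemma:cyclicShiftPoly}, this is equivalent to the polynomial identity $P_V(x)=\alpha^{[k]}P_V(\alpha^{-1}x)$. Writing $P_V(x)=\sum_j a_j x^{[j]}$ and comparing coefficients term by term yields $a_j=a_j\alpha^{[k]-[j]}$ for every $j$ and every $\alpha\in\bF_{q^d}^*$. Choosing $\alpha$ to be a generator of $\bF_{q^d}^*$, we see that whenever $a_j\neq 0$ we must have $(q^d-1)\mid(q^k-q^j)=q^j(q^{k-j}-1)$. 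Since $\gcd(q^d-1,q^j)=1$, this reduces to $(q^d-1)\mid(q^{k-j}-1)$, i.e., $d\mid(k-j)$; combined with the hypothesis $d\mid k$ this forces $d\mid j$. Hence the only surviving exponents in $P_V$ are of the form $[di]$, as claimed.

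For the ``if'' direction, assume $P_V(x)=\sum_{i=0}^{k/d}c_i x^{[di]}$. The crucial observation is that every $\alpha\in\bF_{q^d}$ satisfies $\alpha^{q^d}=\alpha$, and therefore $\alpha^{[di]}=\alpha$ for all $i\ge 0$. Consequently, for any $v\in V$ and $\alpha\in\bF_{q^d}$,
\[
P_V(\alpha v)=\sum_{i=0}^{k/d}c_i\,\alpha^{[di]}v^{[di]}=\alpha\sum_{i=0}^{k/d}c_i v^{[di]}=\alpha P_V(v)=0,
\]
so $\alpha v\in V$. Since $V$ is already an $\bF_q$-subspace (being the root set of a subspace polynomial) and is now shown to be closed under $\bF_{q^d}$-scalar multiplication, it is an $\bF_{q^d}$-subspace, and by Remark~\ref{remark:sumOfFqd} this places $V$ in $\bC_d$.

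The only mildly delicate step is the number-theoretic deduction that $\alpha^{q^k-q^j}=1$ for all $\alpha\in\bF_{q^d}^*$ forces $d\mid j$; this is handled cleanly by the factorization $q^k-q^j=q^j(q^{k-j}-1)$ together with $\gcd(q^j,q^d-1)=1$, after which the standard identity $\gcd(q^r-1,q^s-1)=q^{\gcd(r,s)}-1$ (already invoked elsewhere in the paper) finishes the argument. Everything else is a direct manipulation of linearized polynomials using the Frobenius fixed-point property of $\bF_{q^d}$.
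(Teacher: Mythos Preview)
Your proof is correct and takes a genuinely different route from the paper's. The paper works directly from Construction~\ref{construction:Cd}: given $V=G(U)$ with $U\in\grsmn{q^d}{n/d}{k/d}$, it observes that the subspace polynomial $P_U$ over $\bF_{q^d}$ (which automatically has the form $\sum_i c_i x^{(q^d)^i}$) divides $x^{(q^d)^{n/d}}-x=x^{[n]}-x$ and has root set exactly $V$, hence coincides with $P_V$; the converse runs the same identification backwards. Your argument instead passes through the $\bF_{q^d}$-closure characterization of $\bC_d$ and Lemma~\ref{lemma:cyclicShiftPoly}, reducing the ``only if'' direction to the divisibility $(q^d-1)\mid q^j(q^{k-j}-1)$ and the ``if'' direction to the Frobenius fixed-point identity $\alpha^{[di]}=\alpha$ for $\alpha\in\bF_{q^d}$. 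The paper's approach is more structural---it literally identifies $P_V$ with a $q^d$-linearized polynomial $P_U$---while yours reuses the cyclic-shift machinery from Section~\ref{section:Perliminaries} and makes the number-theoretic reason for the exponent restriction explicit.

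One caveat on logical ordering: you invoke Remark~\ref{remark:sumOfFqd}, but in this paper its justification (Theorem~\ref{theorem:CdAlternative} in Appendix~B) actually \emph{cites} Lemma~\ref{lemma:CdPolyStructure}, so as the exposition is arranged your argument is formally circular. This is easily repaired: the implication $V\in\bC_d\Rightarrow\alpha V=V$ for all $\alpha\in\bF_{q^d}^*$ is immediate from the $\bF_{q^d}$-linearity of the isomorphism $g$ in Lemma~\ref{lemma:embedding}, and the reverse implication is exactly the ``on the other hand'' half of Theorem~\ref{theorem:CdAlternative}, which does not rely on the present lemma. A one-line justification of the $\bF_{q^d}$-closure characterization directly from Construction~\ref{construction:Cd} would make your proof self-contained.
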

\begin{proof}
Let $V\in\bC_d$, and let $U\in\grsmn{q^d}{n/d}{k/d}$ be such that $F(U)=V$ (see Construction~\ref{construction:Cd}). By Definition \ref{definition:subspacePoly} it follows that $P_U|x^{(q^d)^{n/d}}-x$. Since $x^{(q^d)^{n/d}}-x=x^{[n]}-x$, it follows that $P_U$ is a subspace polynomial of a subspace $W\in\grsmn{q}{n}{k}$. The roots of $P_U$ are precisely the set $\{g(u)\vert u\in U\}$, where $g$ is the isomorphism between $\bF_{q^d}^{n/d}$ and $\bF_{q^n}$ mentioned in the proof of Lemma~\ref{lemma:embedding}, and hence, $W=V$. Since $P_U$ is a subspace polynomial of a subspace in $\grsmn{q^d}{n/d}{k/d}$, its subspace polynomial is of the form $P_U(x)=\sum_{i=0}^{k/d}c_i x^{(q^d)^i}$. Since $P_V=P_U$, the claim follows.

Conversely, let $V\in \grsmn{q}{n}{k}$ with $P_V(x)=\sum_{i=0}^{k/d}c_i x^{[di]}$. By Definition \ref{definition:subspacePoly}, it follows that $P_V|x^{[n]}-x$, and thus ${P_V|x^{(q^d)^{n/d}}-x}$. Therefore $P_V$ is a subspace polynomial of some $U\in\grsmn{q^d}{n/d}{k/d}$, and hence $V\in\bC_d$.
\end{proof}

\begin{corollary} \label{corollary:CdCyclic}
$\bC_d \subseteq \grsmn{q}{n}{k}$ is a cyclic subspace code.
\end{corollary}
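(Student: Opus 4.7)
The plan is to combine Lemma~\ref{lemma:CdPolyStructure} with Lemma~\ref{lemma:cyclicShiftPoly}: the first characterizes membership in $\bC_d$ purely in terms of the exponent pattern of the subspace polynomial, and the second tells us exactly how cyclic shifting affects the subspace polynomial. The characterization from Lemma~\ref{lemma:CdPolyStructure} is manifestly invariant under the operation described by Lemma~\ref{lemma:cyclicShiftPoly}, so the corollary should follow almost immediately.

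First I would fix an arbitrary $V \in \bC_d$ and an arbitrary $\alpha \in \bF_{q^n}^*$, and aim to show $\alpha V \in \bC_d$. By Lemma~\ref{lemma:CdPolyStructure}, I can write
\[
P_V(x) = \sum_{i=0}^{k/d} c_i\, x^{[di]}
\]
for some $c_0,\ldots,c_{k/d} \in \bF_{q^n}$ (with $c_{k/d}=1$).

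Next I would apply Lemma~\ref{lemma:cyclicShiftPoly} to express $P_{\alpha V}$. The lemma gives $P_{\alpha V}(x) = \alpha^{[k]} \cdot P_V(\alpha^{-1} x)$, and substituting the expression above yields
\[
P_{\alpha V}(x) = \sum_{i=0}^{k/d} c_i\, \alpha^{[k]-[di]}\, x^{[di]}.
\]
The crucial observation is that only exponents of the form $x^{[di]}$ with $0 \le i \le k/d$ appear in this sum — the shift acts diagonally on the coefficient vector indexed by the allowed exponents, so it preserves the support pattern $\{[0],[d],[2d],\ldots,[k]\}$.

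Finally I would invoke the converse direction of Lemma~\ref{lemma:CdPolyStructure}: since $P_{\alpha V}$ has the required shape $\sum_{i=0}^{k/d} c_i' x^{[di]}$ with $c_i' \triangleq c_i \alpha^{[k]-[di]}$, the subspace $\alpha V$ lies in $\bC_d$. As $\alpha$ and $V$ were arbitrary, $\bC_d$ is cyclic. I do not expect any real obstacle here — the work has already been done in establishing the polynomial characterization of $\bC_d$ and the transformation rule for cyclic shifts; this corollary is essentially a one-line consequence of their conjunction.
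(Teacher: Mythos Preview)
Your proposal is correct and follows essentially the same argument as the paper: apply Lemma~\ref{lemma:CdPolyStructure} to write $P_V$ with exponents in $\{[di]\}$, use Lemma~\ref{lemma:cyclicShiftPoly} to see that $P_{\alpha V}$ has the same exponent support, and conclude via the converse direction of Lemma~\ref{lemma:CdPolyStructure}. The paper's proof is just a terser version of what you wrote.
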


\begin{proof}
Let $V\in\bC_d$ and $\alpha\in\bF_{q^n}^*$. By Lemma \ref{lemma:CdPolyStructure} the subspace polynomial of $V$ is of the form $P_V(x)=\sum_{i=0}^{k/d}c_i x^{[di]}$ for some $c_i\in \bF_{q^n}$. By Lemma \ref{lemma:cyclicShiftPoly} the subspace polynomial of $\alpha V$ is $P_V(x)=\sum_{i=0}^{k/d}c_i\alpha^{[k]-[di]} x^{[di]}$. Again by Lemma~\ref{lemma:CdPolyStructure}, it follows that $\alpha V\in \bC_d$.
\end{proof}

Since for $V\in\bC_d$ we have that $\gap(V)\ge d$, and the following result is a consequence of Corollary~\ref{corollary:distanceLowerBound} and Definition~\ref{definition:gap}.

\begin{corollary}\label{corollary:CdGap}
The minimum distance of $\bC_d$ is $2d$.
\end{corollary}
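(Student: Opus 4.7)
The plan is to prove $d(\bC_d) = 2d$ by establishing the inequality in both directions: a lower bound coming directly from the subspace-polynomial shape, and a matching upper bound from an explicit pair of codewords.

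For the lower bound I would apply Lemma~\ref{lemma:CdPolyStructure}: every $V\in\bC_d$ has $P_V(x) = \sum_{i=0}^{k/d} c_i x^{[di]}$, so the only positions at which non-zero coefficients can appear are $[0], [d], [2d], \ldots, [k]$. In particular, the topmost non-leading non-zero coefficient sits at a position of the form $[di]$ with $i\le k/d-1$, hence at position at most $k-d$, and Definition~\ref{definition:gap} forces $\gap(V) \ge d$. Corollary~\ref{corollary:distanceLowerBound} then yields $d(U,V) \ge 2\min(\gap(U),\gap(V)) \ge 2d$ for any distinct $U,V\in\bC_d$.

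For the upper bound I plan to exhibit a concrete pair of codewords realising distance $2d$. Since $d\mid\gcd(n,k)$ and $k<n$, we have $k+d\le n$, so there exist $\bF_{q^d}$-linearly independent elements $\alpha_1,\ldots,\alpha_{k/d+1}\in\bF_{q^n}$. Using the alternative description of $\bC_d$ from Remark~\ref{remark:sumOfFqd}, I would set
\[
U \triangleq \sum_{i=1}^{k/d}\alpha_i\bF_{q^d}, \qquad V \triangleq \alpha_{k/d+1}\bF_{q^d}+\sum_{i=1}^{k/d-1}\alpha_i\bF_{q^d}.
\]
Both $U,V$ lie in $\bC_d$, and the $\bF_{q^d}$-independence of $\alpha_1,\ldots,\alpha_{k/d+1}$ forces $U\cap V = \sum_{i=1}^{k/d-1}\alpha_i\bF_{q^d}$, which has $\bF_q$-dimension $d(k/d-1) = k-d$. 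Hence $d(U,V) = 2k - 2(k-d) = 2d$, matching the lower bound.

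Combining the two bounds yields the stated equality. The only point requiring any attention is the existence of $k/d+1$ mutually $\bF_{q^d}$-independent elements in $\bF_{q^n}$, which follows at once from $k+d\le n$; no real obstacle is anticipated, and indeed the bulk of the work has already been done in Lemma~\ref{lemma:CdPolyStructure} and Corollary~\ref{corollary:distanceLowerBound}.
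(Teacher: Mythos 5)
Your lower bound is exactly the paper's argument: Lemma~\ref{lemma:CdPolyStructure} confines the non-zero coefficients of $P_V$ to positions $[di]$, so $\gap(V)\ge d$ for every $V\in\bC_d$, and Corollary~\ref{corollary:distanceLowerBound} gives $d(U,V)\ge 2d$. Where you go beyond the paper is in proving tightness: the paper asserts the minimum distance \emph{is} $2d$ but only justifies the inequality $\ge 2d$, whereas you exhibit an explicit pair $U,V\in\bC_d$ (built from $k/d+1$ elements of $\bF_{q^n}$ that are $\bF_{q^d}$-independent, which exist because $d\mid\gcd(n,k)$ and $k<n$ force $n\ge k+d$) intersecting in $\bF_q$-dimension exactly $k-d$, so $d(U,V)=2d$. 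That computation is correct, including the edge case $k=d$ where the intersection is trivial, and it relies only on the alternative description of $\bC_d$ from Remark~\ref{remark:sumOfFqd}. So your proof is correct and strictly more complete than the paper's; the extra half is short and worth having.
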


The structure of the subspace polynomials of the codewords of $\bC_d$ allows us to construct a code~$\bC$ which is a union of $\bC_{d_i}$ for distinct $d_i$'s which divide $\gcd(n,k)$. We now analyze the size and distance of the resulting code.

\begin{lemma}\label{lemma:CdiIntersection}
Let $k,n\in \bN,k<n$. If $d_1,\ldots,d_t$ divide both $n$ and $k$ and $d=\lcm (d_1,\ldots,d_t)$ then~$\bigcap _{i=1}^t \bC_{d_i} = \bC_d$. 
\end{lemma}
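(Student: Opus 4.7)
The plan is to reduce both inclusions to the characterization of $\bC_d$ by the support pattern of its subspace polynomials, as given by Lemma~\ref{lemma:CdPolyStructure}.

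First, I would note that $\bC_d$ is well-defined: since each $d_i$ divides $\gcd(n,k)$, the integer $\gcd(n,k)$ is a common multiple of the $d_i$'s, hence $d=\lcm(d_1,\ldots,d_t)$ divides $\gcd(n,k)$. Thus both sides of the claimed equality are genuine subcodes of $\grsmn{q}{n}{k}$.

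For the inclusion $\bC_d \sus \bigcap_{i=1}^t \bC_{d_i}$, I would take $V\in\bC_d$. By Lemma~\ref{lemma:CdPolyStructure}, $P_V(x)=\sum_{j=0}^{k/d}c_j x^{[dj]}$. Since $d_i \mid d$ for every $i$, each exponent $dj$ can be rewritten as $d_i \cdot (d j /d_i)$, so $P_V$ already has the shape required by Lemma~\ref{lemma:CdPolyStructure} with respect to $d_i$, namely $P_V(x)=\sum_{m}c'_m x^{[d_i m]}$. Hence $V\in\bC_{d_i}$ for every $i$.

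For the reverse inclusion, I would take $V\in\bigcap_{i=1}^t\bC_{d_i}$. Writing $P_V(x)=x^{[k]}+\sum_{\ell} a_\ell x^{[\ell]}$ with $a_\ell\ne 0$, Lemma~\ref{lemma:CdPolyStructure} applied to each $d_i$ forces every such exponent $\ell$ (and $k$ itself) to be divisible by $d_i$. Therefore every such $\ell$ is divisible by $\lcm(d_1,\ldots,d_t)=d$, so $P_V(x)=\sum_{j}c_j x^{[dj]}$, and invoking Lemma~\ref{lemma:CdPolyStructure} once more gives $V\in\bC_d$. The main technical point here is essentially bookkeeping: both directions rely on the fact that the set of integers $\ell$ with $d_i\mid \ell$ for all $i$ coincides with the multiples of $\lcm(d_1,\ldots,d_t)$, so there is no real obstacle beyond invoking Lemma~\ref{lemma:CdPolyStructure} in both directions.
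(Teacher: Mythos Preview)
Your proposal is correct and follows essentially the same approach as the paper: both directions are reduced to Lemma~\ref{lemma:CdPolyStructure}, using that the exponents $\ell$ with $d_i\mid \ell$ for all $i$ are exactly the multiples of $d=\lcm(d_1,\ldots,d_t)$. Your added remark that $d\mid\gcd(n,k)$ (so $\bC_d$ is well-defined) is a welcome clarification the paper leaves implicit.
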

\begin{proof}
According to Lemma~\ref{lemma:CdPolyStructure} if $V\in \bC_d$ then $P_V(x)=\sum_{i=0}^{k/d} c_i x^{[id]}$. Since $d_j \vert d$ for each $j$, we may also write $P_V(x)=\sum _{i=0}^{k/d_j} c_i' x^{[id_j]}$, where all additional coefficients are 0, and thus $V\in \bC_{d_j}$ for each~$j$. 

On the other hand, if $V\in \bigcap_{i=1}^t \bC_{d_i}$, again by Lemma~\ref{lemma:CdPolyStructure} it follows that all nonzero coefficients of $P_V$ correspond to $x^{[\ell]}$ such that $d_j\vert \ell$ for each $j$. Thus, $d\vert \ell$ and $V\in \bC_d$.
\end{proof}

\begin{construction}\label{construction:manyCds}
Let $k,n\in \bN,k<n$. If $d_1,\ldots,d_t$ divide both $n$ and $k$ then let $\bC \triangleq \bigcup_{i=1}^t \bC_{d_i}$.
\end{construction}

\begin{lemma}
$\bC$ is a cyclic code of with codewords of dimension $k$ and minimum distance $2\min\{d_i\}_{i=1}^t$. The size of $\bC$ is given by 
\[
|\bC| = \sum_{i=1}^t\qbin{n/d_i}{k/d_i}{q^{d_i}}-\sum_{i<j}\qbin{n/\lcm(d_i,d_j)}{k/\lcm(d_i,d_j)}{q^{\lcm(d_i,d_j)}}+\sum_{i<j<\ell}\qbin{n/\lcm(d_i,d_j,d_\ell)}{k/\lcm(d_i,d_j,d_\ell)}{q^{\lcm(d_i,d_j,d_\ell)}}-\cdots.
\]
\end{lemma}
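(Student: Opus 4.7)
The plan is to break the claim into its three components — cyclicity together with the dimension, the minimum distance, and the size — and tackle each one using the tools already developed for a single $\bC_d$.

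First I would observe that cyclicity is essentially immediate: each $\bC_{d_i}$ is cyclic by Corollary~\ref{corollary:CdCyclic}, and an arbitrary union of cyclic sets is cyclic because the cyclic-shift map preserves membership in each $\bC_{d_i}$ individually. Similarly, each $\bC_{d_i} \subseteq \grsmn{q}{n}{k}$ by construction, so every codeword of $\bC$ has dimension~$k$.

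For the minimum distance, let $d \triangleq \min\{d_i\}_{i=1}^{t}$. For the lower bound, take distinct $U,V\in \bC$ with $U\in\bC_{d_i}$ and $V\in\bC_{d_j}$. By Lemma~\ref{lemma:CdPolyStructure} the nonzero non-leading monomials of $P_U$ (resp.~$P_V$) have exponents that are multiples of $d_i$ (resp.~$d_j$), so $\gap(U)\ge d_i$ and $\gap(V)\ge d_j$. Corollary~\ref{corollary:distanceLowerBound} then gives $d(U,V) \ge 2\min(d_i,d_j) \ge 2d$. For tightness, Corollary~\ref{corollary:CdGap} guarantees that $\bC_{d}\subseteq \bC$ already contains two subspaces at distance exactly~$2d$, so the minimum distance of $\bC$ equals $2d$.

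For the size, I would apply the inclusion-exclusion principle to the family $\{\bC_{d_1},\ldots,\bC_{d_t}\}$. The key input is Lemma~\ref{lemma:CdiIntersection}, which identifies any intersection $\bigcap_{i\in S}\bC_{d_i}$ with $\bC_{\lcm_{i\in S}d_i}$; one must note that since each $d_i$ divides $\gcd(n,k)$, so does any such $\lcm$, so $\bC_{\lcm}$ is well defined. Combining this with the count $|\bC_d|=\qbin{n/d}{k/d}{q^d}$ from Corollary~\ref{corollary:CdSize} yields exactly the alternating sum displayed in the statement. I do not anticipate any genuine obstacle here — the only point requiring care is that the $d_i$ need not be linearly ordered by divisibility, so intersections genuinely produce $\lcm$s rather than maxima, which is precisely why the alternating sum does not collapse.
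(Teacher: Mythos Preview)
Your proposal is correct and follows essentially the same route as the paper: gap bounds via Lemma~\ref{lemma:CdPolyStructure}/Corollary~\ref{corollary:distanceLowerBound} for the distance, and inclusion--exclusion via Lemma~\ref{lemma:CdiIntersection} and Corollary~\ref{corollary:CdSize} for the size. You are in fact slightly more thorough than the paper, which does not explicitly address cyclicity or the tightness of the minimum-distance bound.
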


\begin{proof}
By Corollary \ref{corollary:CdGap} we have that $gap(V)\ge \min\{d_i\}_{i=1}^t$ for each $V\in\bC$, and hence the minimum distance of $\bC$ is at least $2\min\{d_i\}_{i=1}^t$ by Corollary \ref{corollary:distanceLowerBound}. By Corollary~\ref{corollary:CdSize} we have that $|\bC_{d_i}|=\qbin{n/d_i}{k/d_i}{q^{d_i}}$ for each $i$. Furthermore, by Lemma~\ref{lemma:CdiIntersection} the size of the intersection of $\bC_{d_{i_1}},\ldots,\bC_{d_{i_\ell}}$ is $\qbin{n/d}{k/d}{q^d}$ where~$d=\lcm(d_{i_1},\ldots,d_{i_\ell})$. These facts allow us to obtain the exact size of $\bC$ using the inclusion-exclusion principle \cite[Chapter~10]{vanLintandWilson}.
\end{proof}

Using similar techniques, we show that a cyclic code over a large field may be embedded in a Grassmannian over a smaller field, while preserving cyclicity and multiplying the minimal distance by some factor. Note that Construction \ref{construction:Cd} is a special case of this technique, where the embedded code is $\grsmn{q^d}{n/d}{k/d}$. 

\begin{theorem}
Let $d$ be an integer such that $d|\gcd(n,k)$. If $\bC\subseteq \grsmn{q^d}{n/d}{k/d}$ is a cyclic code with minimum distance $2\cdot (k/d) -2\delta$ then there exists a cyclic code $\bC' \subseteq \grsmn{q}{n}{k}$ of size $|\bC|$ and minimum distance $2k-2d \delta$.
\end{theorem}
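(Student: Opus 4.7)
The plan is to set $\bC' \triangleq \{G(V) \mid V \in \bC\}$, where $G$ is the embedding from Lemma~\ref{lemma:embedding}, and verify each of the three claimed properties: size, cyclicity, and minimum distance. This generalizes Construction~\ref{construction:Cd}, in which the input code is the entire Grassmannian $\grsmn{q^d}{n/d}{k/d}$.

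The size claim is immediate: since the map $G$ is injective (as established in the proof of Lemma~\ref{lemma:embedding}), we have $|\bC'|=|\bC|$. For cyclicity, the key observation is to unpack the decomposition $g = h \circ f$ used to define $G$, where $f:\bF_{q^d}^{n/d}\to\bF_{(q^d)^{n/d}}$ is an $\bF_{q^d}$-vector space isomorphism and $h:\bF_{(q^d)^{n/d}}\to\bF_{q^n}$ is a field isomorphism. Because $h$ preserves multiplication, for every $\alpha\in\bF_{q^n}^*$ we have $h(h^{-1}(\alpha)\cdot w)=\alpha\cdot h(w)$ for all $w$; applied setwise to any $V$ in $\grsmn{q^d}{n/d}{k/d}$ (viewed, after $f$, inside $\bF_{(q^d)^{n/d}}$), this yields $G(h^{-1}(\alpha)\cdot V)=\alpha\cdot G(V)$. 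Under the canonical identification $\bF_{q^n}=\bF_{(q^d)^{n/d}}$ that lets us speak of $\bC$ as cyclic, $\alpha V\in\bC$ for every $V\in\bC$ and every $\alpha\in\bF_{q^n}^*$, so $\alpha\cdot G(V)=G(\alpha V)\in\bC'$.

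For the distance claim, fix two distinct codewords $V_1,V_2\in\bC$. By hypothesis, their distance in $\grsmn{q^d}{n/d}{k/d}$ satisfies $2(k/d)-2\dim_{\bF_{q^d}}(V_1\cap V_2)\ge 2(k/d)-2\delta$, hence $\dim_{\bF_{q^d}}(V_1\cap V_2)\le \delta$. The map $G$ is an $\bF_{q^d}$-linear bijection onto its image and intertwines intersections, i.e.\ $G(V_1\cap V_2)=G(V_1)\cap G(V_2)$ (as subsets of $\bF_{q^n}$). Since an $\bF_{q^d}$-subspace of $\bF_{q^n}$ of $\bF_{q^d}$-dimension $m$ has $\bF_q$-dimension $dm$, we obtain $\dim_{\bF_q}(G(V_1)\cap G(V_2))=d\cdot \dim_{\bF_{q^d}}(V_1\cap V_2)\le d\delta$. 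Therefore
\[
d(G(V_1),G(V_2))=2k-2\dim_{\bF_q}(G(V_1)\cap G(V_2))\ge 2k-2d\delta,
\]
as required.

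The only mildly delicate point is justifying the commutation $G(\alpha V)=\alpha\,G(V)$, since a priori $G$ is only an $\bF_{q^d}$-linear map between vector spaces, not an $\bF_{q^n}$-linear map; this is precisely why one must exploit the field isomorphism factor $h$ in the definition of $g$ rather than treating $G$ as an abstract linear map. All remaining steps reduce to routine bookkeeping using the identities $\dim_{\bF_q}W=d\cdot\dim_{\bF_{q^d}}W$ for $\bF_{q^d}$-subspaces $W\subseteq\bF_{q^n}$, and injectivity of $G$.
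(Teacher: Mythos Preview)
Your proof is correct and follows the same construction as the paper: set $\bC'=\{G(V)\mid V\in\bC\}$ and verify size, cyclicity, and distance. The only noteworthy difference is in the cyclicity step: the paper invokes the identity $P_V(x)=P_{G(V)}(x)$ established in the proof of Lemma~\ref{lemma:CdPolyStructure} (and then implicitly Lemma~\ref{lemma:cyclicShiftPoly}) to conclude that $G$ commutes with cyclic shifts, whereas you obtain $G(\alpha V)=\alpha\,G(V)$ directly from the multiplicativity of the field isomorphism $h$ in the factorization $g=h\circ f$. Both routes yield the same intermediate fact; yours is more self-contained, while the paper's keeps the argument in the subspace-polynomial language used throughout. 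Your distance argument, via $G(V_1\cap V_2)=G(V_1)\cap G(V_2)$ and the relation $\dim_{\bF_q}W=d\cdot\dim_{\bF_{q^d}}W$, is exactly equivalent to the paper's counting-and-contradiction argument for $\dim(G(U_1)\cap G(U_2))=d\cdot\dim(U_1\cap U_2)$.
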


\begin{proof}
Let $g:\bF_{q^d}^{n/d}\to \bF_{q^n}$ and $G:\grsmn{q^d}{n/d}{k/d}\to\grsmn{q}{n}{k}$ be the embeddings defined in the proof of Lemma~\ref{lemma:embedding}. If $\bC'\triangleq\{G(V)\vert V\in\bC\}$ then $|\bC'|=|\bC|$, since $G$ is injective. The cyclic property of $\bC'$ follows from the fact that $P_V(x)=P_{G(V)}(x)$ for all $V\in\bC$, as shown in the proof of Lemma ~\ref{lemma:CdPolyStructure}. To bound the minimum distance of $\bC'$ it suffices to show that if $U_1,U_2\in \bC$ then \[\dim \left(G(U_1) \cap G(U_2)\right)= d \cdot\dim(U_1,U_2).\]

Indeed, if $w\triangleq\dim(U_1\cap U_2)$, then since $g$ is an isomorphism of subspaces over $\bF_{q^d}$, it follows that the set $Z\triangleq\{g(z)\vert z\in U_1\cap U_2\}$ is a subspace of $\bF_{q^n}$ over $\bF_q$. By a simple counting argument, ${\dim Z = dw}$, and hence, $\dim \left(F(U_1) \cap F(U_2)\right)\ge dw$. Assuming for contradiction that $\dim \left(F(U_1) \cap F(U_2)\right)>dw$ clearly implies that $\dim(U_1\cap U_2)>w$, a contradiction.
\end{proof}

\section{Conclusions and Future Work}\label{section:Conclusions}

In this paper we have considered constructions of cyclic
subspace codes. We have proved the existence of a cyclic code
in $\grsmn{q}{n}{k}$ for any given $k$ and infinitely many values of~$n$.
The constructed codes have minimum subspace distance $2k-2$,
the normalizer of a Singer subgroup is their automorphism group
if~$n$ is a prime, and they have full length orbits for all values of~$n$.
We have also constructed large codes when all the orbits are
degenerated. We have shown how the representation of subspaces by their
subspace polynomials can be used in constructing subspace codes. 

For future research, the main problems are to construct cyclic codes of large size, to explore the structure and properties of our codes, and to examine possible decoding algorithms for them. It is easily verified that the vast majority of subspaces have full length orbits. Therefore, it seems reasonable to conjecture that full length orbits with minimum distance $2k-2$ exist for any value of $n,k,q$ (see Conjecture \ref{conjecture:2kminus2}). Although the codes presented in Section~\ref{section:full_orbit} are the first known explicit construction of such codes, they are most likely the tip of the iceberg, and codes of these parameters are abound. 

Although the gap of two polynomials provides significant information about the intersection of their respective subspaces, Remark \ref{remark:gapNotTight} shows that the gap might not be the most efficient tool for this purpose. Therefore, another open problem is finding a better measure for the intersection of two subspaces, and in particular, two subspaces from the same orbit.

A prominent part of the study of subspace polynomials relies on understanding the connection between the coefficients of a polynomial and the size of the respective splitting field. Hence, any progress in this direction may provide an improvement of our results. 

\section*{Acknowledgments}
The authors would like to thank Thomas Honold for bringing~\cite{Drudge} to their attention.

\section*{Appendix A}\label{appendix:A}
\begin{proof} (of Lemma~\ref{lemma:CyclicIsNotForb})
Assume $F^i(V)=\alpha V$ for some $\alpha$. By Lemmas~\ref{lemma:cyclicShiftPoly} and \ref{lemma:FrobeniusShiftPoly},
\begin{eqnarray*}
P_{\alpha V} (x) &=& x^{[k]}+\alpha^{[k]-[1]}\cdot\alpha_1 x^{[1]}+\alpha^{[k]-1}\cdot\alpha_0 x\\
P_{F^i(V)} (x) &=& x^{[k]}+F^i(\alpha_1) x^{[1]}+F^i(\alpha_0) x.
\end{eqnarray*}
By Lemma~\ref{lemma:subspaceEqual},

\begin{equation*}
\begin{cases}
        \begin{array}{lcl}
        \alpha^{[k]-[1]}\cdot\alpha_1=F^i(\alpha_1)\\
        \alpha^{[k]-1}\cdot\alpha_0=F^i(\alpha_0)
        \end{array}
 \\
\end{cases}
\end{equation*}
\begin{equation*}
\begin{cases}
        \begin{array}{lcl}
        \alpha^{[k]-[1]}\cdot\alpha_1=\alpha_1^{[i]}\\
        \alpha^{[k]-1}\cdot \alpha_0=\alpha_0^{[i]}
        \end{array}
 \\
\end{cases}
\end{equation*}
Since $\alpha_0\ne 0$ (by Lemma~\ref{lemma:a0Not0}) and $\alpha_1\ne 0$, $\alpha ^{[1]-1}=\left(\frac{\alpha_0}{\alpha_1}\right) ^{q^i-1}$. Using some algebraic manipulations we have,
\begin{eqnarray*}
\alpha^{[k]-[1]}&=&\alpha_1^{[i]-1}\\
\alpha^{\left(q-1\right)\left(\frac{q^k-q}{q-1}\right)}&=&\alpha_1^{q^i-1}\\
\left(\frac{\alpha_0}{\alpha_1}\right) ^{\left(q^i-1\right)\left(\frac{q^k-q}{q-1}\right)}&=&\alpha_1^{q^i-1}\\
\frac{\alpha_0^{\frac{q^k-q}{q-1}\cdot(q^i-1)}}{\alpha_1^{\frac{q^k-q}{q-1}\cdot(q^i-1)+(q^i-1)}}&=&1\\
\frac{\alpha_0^{\frac{q^k-q}{q-1}\cdot(q^i-1)}}{\alpha_1^{\frac{q^k-1}{q-1}\cdot(q^i-1)}}&=&1\\
\left(\frac{\alpha_0^{\frac{q^k-q}{q-1}}}{\alpha_1^{\frac{q^k-1}{q-1}}}\right)^{q^i-1}&=&1,\\
\end{eqnarray*}

which concludes the proof of one direction of the lemma. Now assume \[\left(\frac{\alpha_0^{\frac{q^k-q}{q-1}}}{\alpha_1 ^{\frac{q^k-1}{q-1}}}\right)^{q^i-1}=1.\]Define $\alpha \triangleq \left( \frac{\alpha_0}{\alpha_1}\right)^{\frac{q^i-1}{q-1}}$. We get 
\begin{eqnarray*}
\left(\frac{\alpha_0^{\frac{q^k-q}{q-1}}}{\alpha_1 ^{\frac{q^k-1}{q-1}}}\right)^{q^i-1}&=&1\\
\left(\frac{\alpha_0^{\frac{q^k-q}{q-1}}}{\alpha_1 ^{\frac{q^k-q}{q-1}}\cdot \alpha_1}\right)^{q^i-1}&=&1\\
\left(\frac{\alpha_0^{\frac{q^i-1}{q-1}}}{\alpha_1 ^{\frac{q^i-1}{q-1}}}\right)^{q^k-q}&=&\alpha_1 ^{q^i-1}\\
\alpha ^{q^k-q}&=&\alpha_1 ^{q^i-1}.
\end{eqnarray*}
In addition, we have $\alpha^{q^k-1}=\alpha^{q^k-q}\alpha^{q-1}=\left(\alpha_1^{q^i-1} \right)\cdot\left(\frac{\alpha_0^{q^i-1}}{\alpha_1 ^{q^i-1}}\right)=\alpha_0^{q^i-1}$. Therefore:
\begin{equation*}
\begin{cases}
        \begin{array}{lcl}
        \alpha^{q^k-q}=\alpha_1^{q^i-1}\\
        \alpha^{q^k-1}=\alpha_0^{q^i-1}
        \end{array}
 \\
\end{cases}
\end{equation*}
\begin{equation*}
\begin{cases}
        \begin{array}{lcl}
        \alpha^{q^k-q}\cdot \alpha_1=\alpha_1^{q^i}\\
        \alpha^{q^k-1}\cdot \alpha_0=\alpha_0^{q^i}
        \end{array}
 \\
\end{cases},
\end{equation*}
which implies that $F^i(V)=\alpha V$ due to equality between the coefficients of the corresponding subspace polynomials.
\end{proof}

\begin{proof} (of Lemma~\ref{lemma:alpha0alpha1}) Assume for contradiction that 
\[
\alpha_0^{\frac{q^k-q}{q-1}}\sim_1 \alpha_1 ^{\frac{q^k-1}{q-1}},
\]
i.e., there exists $\alpha\in \bF_q^*$ such that 
\begin{eqnarray}\label{equation:alpha1}
\alpha\cdot\gamma^{\frac{q^k-q}{q-1}}=\left(\gamma^q\right)^{\frac{q^k-1}{q-1}}.
\end{eqnarray}
Raising both sides of (\ref{equation:alpha1}) by the $(q-1)$th power yields
\begin{eqnarray}
\nonumber\gamma^{q^k-q}&=&\gamma^{q^{k+1}-q}\\
\gamma^{q^k(q-1)}&=&1.\label{equation:alpha2}
\end{eqnarray}
Since $q\in\bZ_{q^n-1}^*$, it follows that $q$ has a multiplicative inverse $w$ modulo $q^n-1$. By raising both sides of (\ref{equation:alpha2}) by the $w^k$th power we get that $\gamma^{q-1}=1$, and hence, $\gamma\in \bF_q$, a contradiction.
\end{proof}

\section*{Appendix B}\label{appendix:B}
%\subsection{Enumeration of Cycles} \label{section:enumeration}
In this appendix we prove the equivalence of an alternative definition to Construction \ref{construction:Cd} (see Remark~\ref{remark:sumOfFqd}). The following lemma is required for the proof of equivalence.

\begin{lemma}\label{lemma:unionIsDirectSum}
If $V\in\grsmn{q}{n}{k}$ may be written as $V=\bigcup_{i=1}^{\ell}\alpha_i \bF_{q^d}$, where $d|\gcd(n,k)$, then $V$ may be written as a \textit{direct sum} $V=\sum_{j=1}^{k/d}\beta_j \bF_{q^d}$.
\end{lemma}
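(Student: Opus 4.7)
The approach is to show that $V$ inherits the structure of an $\bF_{q^d}$-vector space directly from the hypothesized set-theoretic decomposition, and then to extract an $\bF_{q^d}$-basis of the correct cardinality. The key observation is that, because each summand $\alpha_i \bF_{q^d}$ is already closed under multiplication by elements of $\bF_{q^d}$, the union $V = \bigcup_{i=1}^\ell \alpha_i \bF_{q^d}$ automatically inherits this closure without any interaction between distinct summands: for every $v\in V$ there is an index $i$ with $v=\alpha_i x$ for some $x\in\bF_{q^d}$, so for every $c\in\bF_{q^d}$ one has $cv = \alpha_i(cx) \in \alpha_i\bF_{q^d} \subseteq V$.

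Next, I will combine this with the hypothesis $V\in\grsmn{q}{n}{k}$, which gives closure of $V$ under addition (and under $\bF_q$-scalars, though the latter is now subsumed by the previous step since $\bF_q\subseteq \bF_{q^d}$). Together, these two closure properties promote $V$ from a mere $\bF_q$-subspace to a genuine $\bF_{q^d}$-subspace of $\bF_{q^n}$, viewed as an $(n/d)$-dimensional vector space over $\bF_{q^d}$.

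Finally, a straightforward counting argument finishes the job: since $|V|=q^k$ and $d \mid k$, the $\bF_{q^d}$-dimension of $V$ equals $k/d$. Choosing any $\bF_{q^d}$-basis $\beta_1,\ldots,\beta_{k/d}$ of $V$ yields the decomposition $V = \sum_{j=1}^{k/d} \beta_j \bF_{q^d}$ as a direct sum, as required.

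I do not anticipate a substantive obstacle here; the only point that deserves careful phrasing is the first step, where one must note that closure of $V$ under $\bF_{q^d}$-multiplication does not require mixing elements from different summands and therefore follows at once from the pointwise membership in some single $\alpha_i \bF_{q^d}$. Once that is stated cleanly, the rest is a one-line invocation of linear algebra over $\bF_{q^d}$.
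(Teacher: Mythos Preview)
Your proposal is correct. Your route differs from the paper's: you show directly that $V$ is an $\bF_{q^d}$-subspace (closure under addition comes from $V\in\grsmn{q}{n}{k}$; closure under $\bF_{q^d}$-scalars comes pointwise from the union hypothesis), then read off an $\bF_{q^d}$-basis of size $k/d$ by a cardinality count. The paper instead argues combinatorially that each $\alpha_i\bF_{q^d}$ is either contained in or disjoint from any partial sum $A_J=\sum_{j\in J}\alpha_j\bF_{q^d}$, and then builds the direct sum greedily by successively adjoining disjoint cosets. Your argument is shorter and more conceptual, and it also makes transparent why $d\mid k$ is needed (so that $q^k$ is a power of $q^d$); the paper's argument is more constructive in that the $\beta_j$ are explicitly a subcollection of the original $\alpha_i$. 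Both reach the same conclusion, and in fact your observation that $V$ is an $\bF_{q^d}$-space is essentially the content of the paper's later Remark~\ref{remark:sumOfFqd}.
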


\begin{proof}
We show that for every $J\subseteq \{1,\ldots,\ell\}$, every $\alpha_i \bF_{q^d}$ is either contained or mutually disjoint with $A_J\triangleq \sum_{j\in J}\alpha_j \bF_{q^d}$. Assume for contradiction that there exists $\alpha_i\bF_{q^d},i\notin J$ that is neither contained nor mutually disjoint with $A_J$. That is, there exists $u_1,u_2 \in \bF_{q^d}^*$ such that $\alpha_i u_1\in A_J$ and $\alpha_i u_2\notin A_J$. Since $\alpha_i u_1\in A_J$ it follows that there exists $s_j\in\bF_{q^d}$ for each $j\in J$ such that $\alpha_i u_1 =\sum_{j\in J}\alpha_j s_j$. Hence, $\alpha_i = u_1^{-1}\sum_{j\in J}\alpha_j s_j$ and therefore $\alpha_i u_2 = u_1^{-1}\sum_{j\in J}\alpha_j s_j \cdot u_2$. However, since $u_2/u_1 \in \bF_{q^d}^*$ and $s_j \in \bF_{q^d}$ for all $j$, it follows that $\alpha_i u_2 = \sum_{j\in J}\alpha_j (s_j u_2/u_1)\in A_J$, a contradiction. 
Therefore, by taking $\alpha_1\bF_{q^d}$ and iteratively expanding it by adding disjoint cyclic shifts of $\bF_{q^d}$, the required direct sum may be achieved.
\end{proof}

\begin{theorem}\label{theorem:CdAlternative}
Let $d\in \bN$ such that $d|\gcd(n,k)$. For a subspace $V\in\grsmn{q}{n}{k}$, $V\in \bC_d$ (see Construction \ref{construction:Cd}) if and only if $V$ may be written as a direct sum of cyclic shifts of $\bF_{q^d}$.
\end{theorem}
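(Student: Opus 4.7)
The plan is to establish a chain of equivalences between three equivalent descriptions of $V\in\grsmn{q}{n}{k}$: (a) membership in $\bC_d$; (b) $V$ is closed under multiplication by $\bF_{q^d}$, i.e., it is an $\bF_{q^d}$-subspace of $\bF_{q^n}$ (necessarily of $\bF_{q^d}$-dimension $k/d$, by the identity $\dim_{\bF_q}V=d\cdot\dim_{\bF_{q^d}}V$); and (c) $V$ admits a direct-sum decomposition $\bigoplus_{j=1}^{k/d}\beta_j\bF_{q^d}$ with $\beta_1,\ldots,\beta_{k/d}\in\bF_{q^n}^*$ linearly independent over $\bF_{q^d}$. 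The theorem is the equivalence (a)$\Leftrightarrow$(c), and the intermediate condition (b) serves as a convenient bridge.

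For (a)$\Leftrightarrow$(b), I would unwind Construction~\ref{construction:Cd} and the proof of Lemma~\ref{lemma:embedding}. Membership in $\bC_d$ means $V=G(U)$ for some $U\in\grsmn{q^d}{n/d}{k/d}$, where $G$ is induced by the $\bF_{q^d}$-linear isomorphism $g:\bF_{q^d}^{n/d}\to\bF_{q^n}$ recalled there. Because $g$ is $\bF_{q^d}$-linear, the image of any $\bF_{q^d}$-subspace of $\bF_{q^d}^{n/d}$ is an $\bF_{q^d}$-subspace of $\bF_{q^n}$; hence (a) implies (b). For the converse, if $V\subseteq\bF_{q^n}$ is closed under $\bF_{q^d}$-scalar multiplication, then $g^{-1}(V)$ is an $\bF_{q^d}$-subspace of $\bF_{q^d}^{n/d}$ of the same $\bF_{q^d}$-dimension $k/d$, so $g^{-1}(V)\in\grsmn{q^d}{n/d}{k/d}$ and $V=G(g^{-1}(V))\in\bC_d$.

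For (b)$\Leftrightarrow$(c), the forward direction just takes an $\bF_{q^d}$-basis $\beta_1,\ldots,\beta_{k/d}$ of $V$ and observes that $V=\bigoplus_{j=1}^{k/d}\beta_j\bF_{q^d}$ by definition of basis. Conversely, a set of the form $\bigoplus_{j=1}^{k/d}\beta_j\bF_{q^d}$ with $\bF_{q^d}$-independent $\beta_j$'s is plainly closed under addition and under multiplication by $\bF_{q^d}$, and its $\bF_q$-dimension is $(k/d)\cdot d=k$, yielding (b). Chaining (a)$\Leftrightarrow$(b)$\Leftrightarrow$(c) finishes the proof.

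The main obstacle is not any deep argument but a careful bookkeeping of which field one is working over: the proof amounts to unraveling Construction~\ref{construction:Cd} and exploiting $\bF_{q^d}$-linearity of the embedding $g$. The only mildly subtle point is that the statement phrases (c) as a \emph{direct} sum, whereas the reader might naturally arrive at $V$ as an arbitrary union of cyclic shifts of $\bF_{q^d}$; Lemma~\ref{lemma:unionIsDirectSum} (stated just before the theorem) bridges exactly that gap, and I would cite it to justify that whenever a subspace $V$ equals such a union, the same $V$ can be rewritten as a direct sum of the required form.
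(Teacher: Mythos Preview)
Your argument is correct, and it is a genuinely different---and in fact more direct---route than the paper's. The paper does not introduce the intermediate condition ``$V$ is an $\bF_{q^d}$-subspace of $\bF_{q^n}$'' explicitly. For the forward direction, the paper instead invokes Lemma~\ref{lemma:CdPolyStructure} to deduce that $P_V(x)=\sum_{i=0}^{k/d}c_ix^{[id]}$, then checks that any root $v$ has $\gamma v$ a root for every $\gamma\in\bF_{q^d}$ (using $\gamma^{q^d}=\gamma$), concluding that $V$ is a union of cyclic shifts of $\bF_{q^d}$; Lemma~\ref{lemma:unionIsDirectSum} then upgrades this to a direct sum. For the converse, the paper pulls back the generators $\alpha_i$ through $g^{-1}$ and verifies their $\bF_{q^d}$-independence by a cardinality argument---essentially your (c)$\Rightarrow$(b)$\Rightarrow$(a), but written coordinatewise rather than via the blanket statement that $g^{-1}$ preserves $\bF_{q^d}$-subspaces.

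What each approach buys: the paper's proof ties the theorem back to the subspace-polynomial machinery used throughout the article, which is thematically convenient but logically circuitous (it needs Lemma~\ref{lemma:CdPolyStructure} and Lemma~\ref{lemma:unionIsDirectSum}). Your approach bypasses both lemmas, relying only on the $\bF_{q^d}$-linearity of $g$ recorded in the proof of Lemma~\ref{lemma:embedding} and the elementary fact that an $\bF_{q^d}$-subspace has an $\bF_{q^d}$-basis. In particular, your final remark about citing Lemma~\ref{lemma:unionIsDirectSum} is unnecessary in your own argument: once you go through (b) and take a basis, the direct sum is automatic.
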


\begin{proof}
If $V\in\bC_d$ then by Lemma~\ref{lemma:CdPolyStructure}, $P_V(x)=\sum_{i=0}^{k/d}c_ix^{[id]}$. Since for all $\gamma\in\bF_{q^d}$ we have $\gamma^{[d]-1}=1$, it follows that if $P_V(v)=0$ for $v\in\bF_{q^n}$, then $P_V(\gamma v)=\sum_{i=0}^{k/d}c_i(\gamma v)^{[id]}=P_V(v)=0$. Therefore, $V$ is a union of cyclic shifts of $\bF_{q^d}$, and according to Lemma~\ref{lemma:unionIsDirectSum} may be written as a direct sum of cyclic shifts of $\bF_{q^d}$.

On the other hand, if $V=\sum_{i=1}^{k/d}\alpha_i\bF_{q^d}$ such that $\alpha_i\in\bF_{q^n}$, let $\beta_i\in\bF_{q^d}^{n/d}, i\in\{1,\ldots,k/d\}$ such that $\beta_i\triangleq g^{-1}(\alpha_i)$, where $g$ is the isomorphism between $\bF_{q^d}^{n/d}$ and $\bF_{q^n}$ mentioned in the proof of Lemma~\ref{lemma:embedding}. Let $U$ be the linear span of~$\{\beta_i\}_{i=1}^{k/d}$ in $\bF_{q^d}^{n/d}$ over $\bF_{q^d}$. We show that $U$ is a $\frac{k}{d}$-subspace. Assume for contradiction that the elements of $\{\beta_i\}_{i=1}^{k/d}$ are linearly dependent, i.e., there exists $\gamma_i\in\bF_{q^d}$ such that $\sum \gamma_i\beta_i=0$. Hence, $0=g(0)=g(\sum \gamma_i \beta_i)=\sum \gamma_i \alpha_i$ and therefore, the element $0\in\bF_{q^n}$ has two distinct representations as an element of $V$. This implies that $|V|<q^k$, a contradiction. Now observe that,
\begin{eqnarray*}
G(U)&=&\left\{g(u)~\vert~u\in U\right\}\\
&=&\left\{g\left(\sum \gamma_i \beta_i\right) ~\vert~\forall i, \gamma_i \in \bF_{q^d}\right\}\\
&=&\left\{\sum \gamma_i \alpha_i~\vert~\forall i, \gamma_i \in \bF_{q^d}\right\}=\sum\alpha_i\bF_{q^d}=V,
\end{eqnarray*}
and hence $V\in\bC_d$.
\end{proof}

\end{document}